\newcommand{\norm}[1]{\left\lVert#1\right\rVert}
\newcommand{\abs}[1]{\left\lvert#1\right\rvert}
\newcommand{\ccausal}[1]{#1}
\newcommand{\leftcp}{\mathcal{E}_l}
\newcommand{\rightcp}{\mathcal{E}_r}
\def\bra#1{\langle{#1}|}
\def\ket#1{|{#1}\rangle}
\newcommand{\braket}[2]{\left< #1 \vphantom{#2} \right|
 \left. #2 \vphantom{#1} \right>} 
\def\inline#1{\emph{#1}.---}
\newtheorem{thm}{Theorem}
\newtheorem{lem}{Lemma}
\begin{document}

\preprint{APS/123-QED}

\title{Matrix Product States for Quantum Stochastic Modeling}


\author{Chengran Yang}
\email{Yangchengran92@gmail.com}
\affiliation{
School of Physical and Mathematical Sciences, Nanyang Technological University, 637371 Singapore, Singapore
}
\affiliation{Complexity institute, Nanyang Technological University, 639798 Singapore, Singapore}

\author{Felix C. Binder}
\email{quantum@felix-binder.net}
\affiliation{
School of Physical and Mathematical Sciences, Nanyang Technological University, 637371 Singapore, Singapore
}
\affiliation{Complexity institute, Nanyang Technological University, 639798 Singapore, Singapore}

\author{Varun Narasimhachar}
\affiliation{
School of Physical and Mathematical Sciences, Nanyang Technological University, 637371 Singapore, Singapore
}
\affiliation{Complexity institute, Nanyang Technological University, 639798 Singapore, Singapore}

\author{Mile Gu}
\email{mgu@quantumcomplexity.org}
\affiliation{
School of Physical and Mathematical Sciences, Nanyang Technological University, 637371 Singapore, Singapore
}
\affiliation{Complexity institute, Nanyang Technological University, 639798 Singapore, Singapore}
\affiliation{
Centre for Quantum Technologies, National University of Singapore, 3 Science Drive 2, 117543 Singapore, Singapore
}

\date{\today}

\pacs{Valid PACS appear here}

\begin{abstract}
In stochastic modeling, there has been a significant effort towards finding predictive models that predict a stochastic process' future using minimal information from its past. Meanwhile, in condensed matter physics, matrix product states (MPS) are known as a particularly efficient representation of 1D spin chains. In this Letter, we associate each stochastic process with a suitable quantum state of a spin chain. We then show that the optimal predictive model for the process leads directly to an MPS representation of the associated quantum state. Conversely, MPS methods offer a systematic construction of the best known quantum predictive models. This connection allows an improved method for computing the quantum memory needed for generating optimal predictions. We prove that this memory coincides with the entanglement of the associated spin chain across the past-future bipartition.
\end{abstract}
\maketitle

The quest for simple representations and models of the physical world, often phrased as Occam's famous razor, underlies most scientific pursuits. In this spirit, computational mechanics seeks the most memory-efficient predictive models for stochastic processes--models which track relevant past information about a process, in order to generate statistically faithful future predictions~\cite{1986Grassberger,Crutchfield1989,Shalizi2001,Crutchfield2011,Paul2013}. The classically minimal models, $\varepsilon$-machines, have been used in diverse contexts from  neuroscience to nonequilibrium  contextuality~\cite{Haslinger2009,yang2008increasing,park2007complexity,Clarke2003,Varn2002,Varn2013,Varn2015,feldman1998measures,Boyd2016identifying,Garner2017themodynamics}. Recently, it was shown that quantum extensions of $\varepsilon$-machines can further reduce their memory~\cite{gu2012quantum}, leading recent studies to find memory-efficient quantum means of predictive modeling~\cite{riechers2015closed,suen2017classical,aghamohammadi2016ambiguity,aghamohammadi2016extreme,mahoney2016occam,Binder2017,Aghamohammadi2017PRX,Elliott2018Continous,Palssone1601302,Garner2017Unbounded,Jouneghani2017Observing}.

In condensed matter, on the other hand, simplicity is sought after for the description of quantum many-body systems. Tensor networks, such as matrix product states (MPS), for instance, provide an efficient and useful description of one-dimensional quantum systems--i.e., spin chains~\cite{Affleck1987,klumper1993matrix,Prosen2011}. This has led to reliable and powerful numerical methods for probing and simulating properties of multi-partite systems, whose study would be otherwise intractable~\cite{white1992density,verstraete2004density,Vidal2007,Orus2007}.

\begin{figure}
	\includegraphics[scale=0.38]{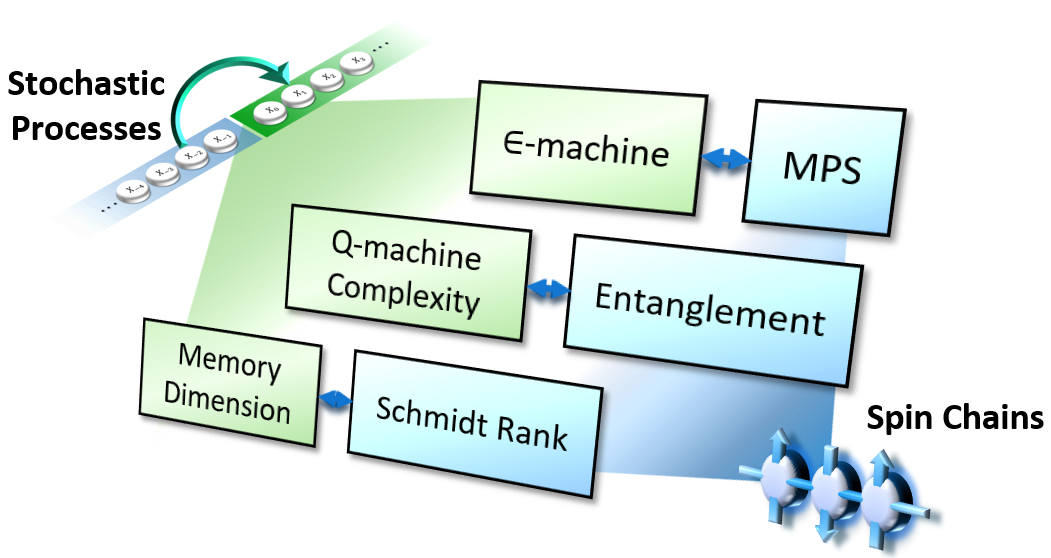}
	\caption{This Letter connects the complexity of stochastic processes and the representational complexity of spin chains. These (i) link $\epsilon$-machines, the provably optimal predictors of a process with the MPS states of a spin chain, (ii) the memory used to simulate a process quantum mechanically with the entanglement of the spin chain and (iii) the Hilbert space dimension of such quantum simulators with the Schmidt rank of the spin chain.}
	\label{fig:diagram}
\end{figure}

In this Letter, we develop a connection between $\varepsilon$-machines and the MPS representation, shown in Fig.~\ref{fig:diagram}. We associate each stochastic process with a suitable quantum state of a spin chain called q-sample, measurement of which generates the corresponding stochastic process. We then show that the classical $\varepsilon$-machine of the process leads to a systematic MPS representation of the q-sample. Conversely, applying MPS methods to this state allows us to construct a q-simulator for the associated stochastic process -- the best known quantum model~\cite{Binder2017,mahoney2016occam,riechers2015closed}. Lastly, we show that the entanglement across the past-future bipartition of the q-sample coincides exactly with the quantum memory requirements of the q-simulator. These results thus provide a direct means of using MPS methods to study the resource requirements of quantum stochastic simulation, extending their relevance to the field of predictive modeling.

Our approach complements other uses of tensor network methods for the description of classical systems with stochastic elements~\cite{Johnson2010,Johnson2015,2015Guta,Monras2016,Critch2012,2014Kliesch,Monras} and machine learning~\cite{Han2017,Novikov2016,Stoudenmire2017,Stoudenmire2016,Loeliger2017,Glasser2018,Guo2018,Chen2018,Huggins2018}. We extend these results in introducing causal structure, adapting MPS methods for predictive modeling.

\inline{Predictive models}Consider a system that generates an output $x_t$ sampled from a random variable $X_t$ at each time $t$. The outputs of the system are described by a stochastic process with joint probability $P(\overleftarrow{X},\overrightarrow{X})$, which correlates past observations $\overleftarrow{X}:=\cdots X_{-2}X_{-1}$ with future outputs $\overrightarrow{X}:= X_{0}X_{1}\cdots$. Thus, each instance of the process with a given past $\overleftarrow{x}:=\cdots x_{-2}x_{-1}$, will exhibit a future output  $\overrightarrow{x}:=x_{0}x_{1}\cdots$ with probability $P(\overrightarrow{x}|\overleftarrow{x})$.

To make predictions, one wants to construct a predictive model that replicates this conditional behavior. Such a model specifies how each observed past $\overleftarrow{x}$ can be encoded into some memory state $\varepsilon(\overleftarrow{x})$ of a physical system $\Xi$, such that repeated systematic actions on $\Xi$ generate outputs $\overrightarrow{x}$ with desired conditional probability, i.e., $P[\overrightarrow{x}|\varepsilon(\overleftarrow{x})] =P(\overrightarrow{x}|\overleftarrow{x})$. Such predictive models are causal--all information about the future contained in $\Xi$ can be obtained from the past.

Storing $\varepsilon(\overleftarrow{x})$, however, costs resources. There is thus interest in encoding strategies that minimize the entropy of $\Xi$. In the case of ergodic stationary processes, computational mechanics provides the provably optimal strategy~\cite{Shalizi2001}. This involves grouping pasts $\overleftarrow{x}$ with identical future statistics $\overrightarrow{X}$ into equivalence classes called \textit{causal states} (labeled $s_i$) by the equivalence relation
\begin{equation}\overleftarrow{x} \sim_\varepsilon \overleftarrow{x}^{'}\text{ if and only if }P\left(\overrightarrow{X}|\overleftarrow{x}\right) = P\left(\overrightarrow{X}|\overleftarrow{x}^{'}\right).
\end{equation}
That is, two pasts $\overleftarrow{x}$ and $\overleftarrow{x}^{'}$ are equivalent if and only if the corresponding conditional probabilities of any future output sequence $\overrightarrow{X}=\overrightarrow{x}$ are identical for all such sequences.
The resulting predictive model, named $\varepsilon$-machine, encodes the given past into a causal state $s_k = \varepsilon(\overleftarrow{x})$~\cite{1986Grassberger,Shalizi2001}.  At each time step, it operates according to a set of transition probabilities $T^x_{kj}:= P(x,s_j|s_k)$--i.e., corresponding to emission of $x$ while transitioning from causal state $s_k$ to $s_j$. $\varepsilon$-machines are a special class of hidden Markov models (HMM) with the synchronizing property--the state of the machine at each time step is fully determined by past observation.

The memory required by an $\varepsilon$-machine is quantified by the Renyi entropies
\begin{equation}
\label{eq:classicl_complexity}
C_\mu^\alpha := \frac{1}{1-\alpha} \log_2\left(\sum_k\pi_k^\alpha\right),
\end{equation}
where $\pi_k$ represents the probability that $\overleftarrow{x}$ belongs to causal state $s_k$. As $\varepsilon$-machines are provably minimal, these Renyi entropies are considered fundamental measures of structure and complexity in the underlying process~\cite{Shalizi2001}. There are two particularly meaningful cases. The most well-studied is the statistical complexity $C_\mu\equiv~C_\mu^1$ ~\cite{Crutchfield1989,perry1999finite} which quantifies the amount of past information a classical predictive model must retain--as measured by Shannon entropy. Meanwhile, the topological complexity $C_\mu^0$~\cite{CrutchfieldTop} represents a single-shot measure of memory cost, capturing the minimal number of configuration states of a predictive model.

\inline{Quantum models}Going beyond (classical) $\epsilon$-machines, quantum models allow for further memory reduction~\cite{gu2012quantum,mahoney2016occam,riechers2015closed,Binder2017}--the most memory-efficient model to date being q-simulator~\cite{Binder2017,mahoney2016occam}. For a given past $\overleftarrow{x}$, a q-simulator retains a quantum state $\ket{\sigma_k}$, encoding the corresponding causal state $s_k$, in its working memory--each with probability $\pi_k$,
 \begin{equation}
 \phi := \sum_k \pi_k\ket{\sigma_k}\bra{\sigma_k}.
 \label{eq:phi}
 \end{equation}
There always exists unitary operator $U$, which interacts this memory register with another quantum system in the initial state $\ket{0}$ such that
 \begin{equation}
 \label{eq:Unitarydefinition}
  U\ket{\sigma_k}\ket{0}
  = \sum_{j,x}\sqrt{T^x_{kj}}\ket{\sigma_j}|x\rangle.
 \end{equation}
Here, $\{\ket{x}\}$ forms an orthonormal basis, in one-to-one correspondence to the elements of the alphabet $\mathscr{A}$. Measurement of the output system in the basis generates the output $x$ with probability $T^x_{kj}$ as desired, collapsing the memory system into the corresponding quantum state $\ket{\sigma_j}$ in the process. Repeated iteration of unitary interaction and measurement generates an output which is statistically identical to the original stochastic process. Because of stationarity the memory register remains in state $\phi$ after each transition~\cite{Binder2017}. The amount of required quantum memory may be quantified by the quantum Renyi entropy:
 \begin{equation}
 	C_q^\alpha := \frac{1}{1-\alpha} \log_2(\mathrm{Tr}[\phi^\alpha]).
 \end{equation}
Analogous to the classical case, $C_q\equiv C_q^1$ represents the average entropic memory required by the q-simulator and we refer to it as the quantum machine complexity. $C_q^0$ quantifies the dimension of the memory Hilbert space $\mathcal{H}_m:= \mathrm{span} \{\ket{\sigma_j}\}$.
Importantly, such quantum encoding almost always leads to a reduction in entropic memory cost ($C_q^1 < C_\mu^1$).

\inline{Matrix Product States}Consider a translationally invariant quantum state of a 1D spin chain with $N$ sites, expressed as
\begin{equation}
 	|\psi\rangle = \sum_{x_1x_2\cdots x_N}c_{x_1x_2\dots x_N}|x_1x_2\cdots x_N\rangle,
\end{equation}
 where $\{|x_k\rangle\}$ is an orthonormal basis of each local Hilbert space. Such a description in terms of coefficients $c_{x_1x_2\dots x_N}$ grows exponentially with $N$. MPSs~\cite{Schollwock2011,June2014} mitigate this adverse scaling by expressing
 \begin{equation}
 	c_{x_1x_2\cdots x_N} = \bra{b_l}A^{x_1} A^{x_2} \cdots A^{x_N}\ket{b_r},\label{eq:coefficients}
 \end{equation}
 where each $A^{x_k}$ is an $m\times m$ complex matrix and $\{\bra{b_l},\ket{b_r}\}$ are the boundaries.  This description gives rise to an intuitive graphical representation (Fig.~\ref{fig:mpsv}).

\begin{figure}[h]
	\includegraphics[scale=0.25]{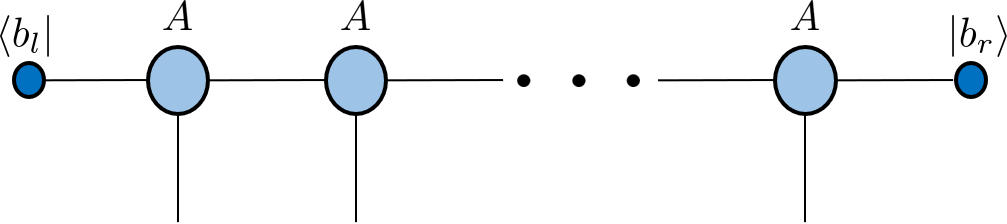}

    \caption{Each site matrix $A$ of a MPS is depicted by a node with ``leg'' representing each of its indices. A link between two nodes indicates summation over the corresponding index.}
    \label{fig:mpsv}
\end{figure}

For $N\to \infty$, we obtain an infinite MPS (iMPS). Many of its properties are determined by the transfer matrix $\mathbb{E}$ (see Fig.~\ref{fig:Domain Eig})
\begin{equation}
	\mathbb{E} = \sum_x A^x\otimes (A^x)^*
\end{equation}
where $(A^x)^*$ is the complex conjugate of $A^x$. If the largest-magnitude eigenvalue $\eta$ of $\mathbb{E}$ is nondegenerate, the boundary becomes irrelevant~\cite{Schollwock2011,June2014}. This is because after applying infinitely many transfer matrices to any potential boundary, it converges to the leading left and right eigenvectors of $\mathbb{E}$, $\mathbb{V}_l$ and $\mathbb{V}_r$--i.e., $\bra{b_l}\bra{b_l}^* \mathbb{E}^\infty \sim \mathbb{V}_l$ and $\mathbb{E}^\infty\ket{b_r}\ket{b_r}^* \sim \mathbb{V}_r$. Note that each leading eigenvector shown in Fig.~\ref{fig:Domain Eig} has two legs. If we designate one index as row index and the other as column index, the leading eigenvectors become matrices, i.e., $V_l$ and $V_r$ (see Supplementary Material~\ref{Appendix:Canonical form}~\cite{SupplementaryInformation}).

\begin{figure}[htp]
  \includegraphics[width=8cm]{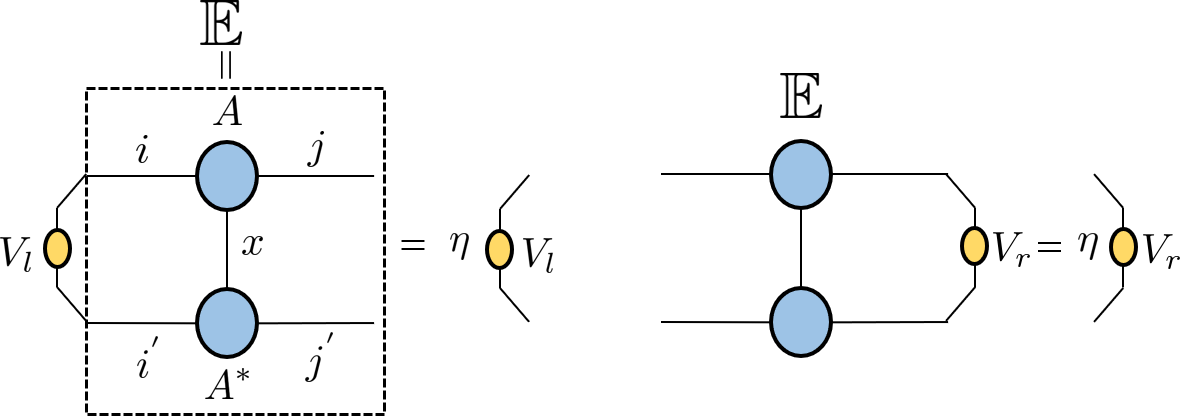}
    \caption{The transfer matrix $\mathbb{E}$ has a largest eigenvalue $\eta$. The corresponding left and right eigenvectors, $V_l$ and $V_r$, are $m\times m$ matrices.}
    \label{fig:Domain Eig}
\end{figure}

\inline{Predictive models and MPS representation}Here we present our main results--the connection between predictive models and the MPS formalism. This is done by associating a stochastic process with a large entangled 1D quantum state
\begin{equation}
  \ket{P\left(\overleftrightarrow{x}\right)} = \sum_{\overleftrightarrow{x}}\sqrt{P\left(\overleftrightarrow{x}\right)}\ket{\overleftrightarrow{x}},
\end{equation}
where $\ket{\overleftrightarrow{x}}$ is an orthonormal basis corresponding to the outputs $\overleftrightarrow{x}:=\cdots x_{-1}x_0\cdots$.  Measuring the quantum state in the basis $\ket{\overleftrightarrow{x}}$ generates the stochastic process. $\ket{P\left(\overleftrightarrow{x}\right)}$ are known as q-samples~\cite{Aharonov2003,Arunachalam2016}, and lie at the heart of many quantum sampling problems~\cite{aaronson2011computational,spring2012boson,crespi2013integrated,lund2014boson,aaronson2016complexity}.

\begin{thm}
A stochastic process $\overleftrightarrow X$, represented by an $\varepsilon$-machine with transition matrix $T$, is fully represented by the iMPS with site matrix
\begin{equation}
	\label{eq:siteMatrix}
	A^x_{kj} = \sqrt{T^x_{kj}}.
\end{equation}
The process is ergodic if and only if the transfer matrix of the iMPS is nondegenerate.
\end{thm}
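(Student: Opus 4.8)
The plan is to derive both statements from a single structural feature: because an $\varepsilon$-machine is unifilar, for each pair $(k,x)$ there is at most one $j$ with $T^x_{kj}>0$ (the current state and emitted symbol fix the successor). I will use this repeatedly. Since $A^x_{kj}=\sqrt{T^x_{kj}}$ is real and nonnegative, the transfer matrix is $\mathbb{E}=\sum_x A^x\otimes A^x$ with entries $\mathbb{E}_{(k,k'),(j,j')}=\sum_x\sqrt{T^x_{kj}T^x_{k'j'}}\ge 0$. First I would examine how $\mathbb{E}$ acts on the ``diagonal'' subspace $D=\mathrm{span}\{\ket{kk}\}$. A diagonal input projected onto a diagonal output gives $\mathbb{E}_{(k,k),(j,j)}=\sum_x T^x_{kj}=:T_{kj}$, the row-stochastic state-to-state matrix of the causal chain, whereas unifilarity forces $\mathbb{E}_{(k,k),(j,j')}=0$ for $j\neq j'$ (one symbol cannot send $k$ to two successors), so off-diagonal inputs never feed diagonal outputs. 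Hence the stationary vector $\bra{V_l}=\sum_k\pi_k\bra{kk}$ is a left eigenvector with eigenvalue $1$ (this is just $\sum_k\pi_k T_{kj}=\pi_j$). I would pin $\eta=1$ as the leading eigenvalue by bounding the row sums of the nonnegative $\mathbb{E}$: unifilarity reduces each factor $\sum_j\sqrt{T^x_{kj}}$ to the single surviving successor, and Cauchy--Schwarz across $x$ (using $\sum_x\sum_j T^x_{kj}=\sum_j T_{kj}=1$) bounds every row sum by $1$, so $\rho(\mathbb{E})\le 1$; since $1$ is already an eigenvalue, $\eta=1$.

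With $\eta=1$ I would compute the bulk probability of a length-$L$ block as $\bra{V_l}\prod_{t=1}^{L}(A^{x_t}\otimes A^{x_t})\ket{V_r}$ (normalized by $\braket{V_l}{V_r}=1$). The key point is that $\bra{V_l}$ is diagonal and, by the unifilar identity $(\bra{kk})(A^x\otimes A^x)=\sum_j T^x_{kj}\bra{jj}$, stays diagonal under every step, producing $\sum_{k_0\cdots k_L}\pi_{k_0}T^{x_1}_{k_0k_1}\cdots T^{x_L}_{k_{L-1}k_L}\bra{k_Lk_L}$. Because the diagonal part of the right eigenvector is the all-ones vector (it solves $T v=v$), the contraction returns exactly $\sum_{k_0}\pi_{k_0}P(x_1\cdots x_L\mid s_{k_0})=P(x_1\cdots x_L)$, the block statistics of the stationary process. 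This identifies the iMPS with the q-sample $\ket{P(\overleftrightarrow{x})}$ and settles the first claim. Equivalently, one may note that unifilarity makes the product of square roots along the unique state path equal the square root of the path probability, so even single amplitudes already reproduce $\sqrt{P}$.

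For the equivalence I would promote this splitting to a block decomposition of $\mathbb{E}$. Ordering the basis as $D$ followed by the off-diagonal block $O=\mathrm{span}\{\ket{kk'}:k\neq k'\}$, the two observations above read $\mathbb{E}_{DD}=T$ and $\mathbb{E}_{DO}=0$, so $\mathbb{E}$ is block lower triangular, $\mathbb{E}=\begin{pmatrix}T&0\\ \mathbb{E}_{OD}&\mathbb{E}_{OO}\end{pmatrix}$. Therefore $\mathrm{spec}(\mathbb{E})=\mathrm{spec}(T)\cup\mathrm{spec}(\mathbb{E}_{OO})$, and the eigenvalue $1$ is inherited from $T$. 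By Perron--Frobenius, the stochastic matrix $T$ has a simple, strictly dominant eigenvalue $1$ precisely when it is irreducible and aperiodic, i.e. when the causal-state chain, and hence the process $\overleftrightarrow{X}$, is ergodic (and mixing). Thus, provided the off-diagonal sector is strictly subdominant, $\eta$ is nondegenerate if and only if the process is ergodic.

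The step I expect to be hardest is proving $\rho(\mathbb{E}_{OO})<1$, i.e. that no modulus-one eigenvalue hides in the off-diagonal sector. The row-sum bound already gives $\rho(\mathbb{E}_{OO})\le 1$, with equality in the Cauchy--Schwarz estimate forcing two distinct states $k\neq k'$ to share an identical next-symbol distribution; iterating along the block-triangular dynamics, a modulus-one eigenvector of $\mathbb{E}_{OO}$ would make $k$ and $k'$ agree on their conditional futures at all orders, contradicting their distinctness under $\sim_\varepsilon$ and the minimality of the $\varepsilon$-machine. Making this contraction quantitative---and carefully distinguishing genuine ergodicity from mere irreducibility versus periodicity in the ``nondegenerate'' condition---is the real work; the block-triangular reduction is what makes it tractable, confining the difficulty to $\mathbb{E}_{OO}$ while Perron--Frobenius on $T$ handles the rest.
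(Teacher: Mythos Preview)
Your approach is correct and uses the same essential ingredients as the paper---unifilarity for diagonal invariance, Cauchy--Schwarz to control the off-diagonal sector, and stationarity to pin down $\bra{V_l}=\sum_k\pi_k\bra{kk}$---but organizes them differently. The paper works in the CP-map picture $\mathcal{E}_l(\rho)=\sum_x(A^x)^\dagger\rho A^x$, shows $\|\mathcal{E}_l(\ket{j}\!\bra{k})\|_1\le 1$ with strict inequality for $j\neq k$, and concludes that any leading left eigenvector must be diagonal, hence unique by ergodicity of the classical chain; it then identifies the full right eigenvector as the Gram matrix $V_r=\sum_{k,j}\braket{\sigma_k}{\sigma_j}\ket{k}\!\bra{j}$ of the q-simulator states and applies unifilarity once at the end to collapse the probability formula. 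Your block-lower-triangular decomposition makes the spectral splitting $\mathrm{spec}(\mathbb{E})=\mathrm{spec}(T)\cup\mathrm{spec}(\mathbb{E}_{OO})$ explicit, so the equivalence reduces to Perron--Frobenius on the stochastic matrix $T$ plus the single analytic claim $\rho(\mathbb{E}_{OO})<1$; and because the left vector stays diagonal at every step, you only ever need the diagonal of $V_r$ (all ones, since $T$ is row-stochastic), bypassing the q-simulator states entirely. What the paper's route buys is the explicit Gram-matrix $V_r$, which is precisely the object reused in Theorems~2 and~3; what your route buys is a sharper isolation of the hard step and a cleaner treatment of the periodicity issue you flag. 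Your sketch for $\rho(\mathbb{E}_{OO})<1$---iterating the Cauchy--Schwarz equality case along the dynamics and invoking causal-state minimality---is the right idea and is essentially what the paper's norm argument encodes, though neither presentation makes that contraction fully quantitative.
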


\begin{proof}
--First, we construct a finite quantum state
\begin{equation}
	\ket{\psi} = \sum_{x_{t:t+L}}\bra{0}A^{x_t}A^{x_{t+1}}\cdots A^{x_{t+L-1}}\ket{0}\ket{x_{t:t+L}}
\end{equation}
with a boundary $\ket{0}$ {and  $x_{t:t+L}:=x_tx_{t+1}\cdots x_{t+L-1}$}. In the limit $L\to \infty, t\to-\infty$, this yields the iMPS. Since each element of each $A^x$ is non-negative the amplitudes generated by the iMPS are as well. In Supplementary Material~\ref{appendix: ergodic property}~\cite{SupplementaryInformation} we prove  that
the largest-magnitude eigenvalue of the transfer matrix of the iMPS given by Eq.~(\ref{eq:siteMatrix}) is nondegenerate if and only if the process is ergodic. Hence, the boundary is irrelevant.

Thus, we only need to show that the amplitudes correspond to the classical distribution $P(\overleftrightarrow{X})$. In Supplementary Material~\ref{Appendix:Canonical form}~\cite{SupplementaryInformation}, the probability distribution generated by the iMPS is shown to be
\begin{equation}
\label{eq:MPS_probability}
\begin{split}
	P_{\mathrm{MPS}}(X_{t:t+L}= x_{t:t+L}) = \\ \hspace{-24pt}\mathrm{Tr}({A^{x_{t+L-1}}}^\dag\cdots&{A^{x_t}}^\dag V_l
   A^{x_t}\cdots A^{x_{t+L-1}}V_r).
\end{split}
\end{equation}

The leading left eigenvector $V_l$ and leading right eigenvector $V_r$ have the following form
\begin{equation}
	\label{eq:Domain_left}
	V_l = \sum_k \pi_k \ket{k}\bra{k}\quad
    V_r = \sum_{k,j} \langle \sigma_k|\sigma_j\rangle \ket{k}\bra{j}
\end{equation}
where $\pi_k$ are the stationary probabilities of causal states $s_k$ and $\ket{j}$ is an orthonormal local site basis (See Supplementary Material~\ref{Appendix: Leftrightleadingeig})~\cite{SupplementaryInformation}). $\ket{\sigma_k}$ are the internal states of the corresponding q-simulator in Eq.~(\ref{eq:Unitarydefinition}). Inserting these expressions into Eq.~(\ref{eq:MPS_probability}), we obtain
\begin{equation}
\label{eq:MPS_distribution}
 P_{\mathrm{MPS}}\left(X_{t:t+L}\right) = P\left(X_{t:t+L}\right),
\end{equation}
as shown in Supplementary Material~\ref{Appendix:probabilities}~\cite{SupplementaryInformation} in more detail. In the limit $L\to\infty$, $t\to-\infty$, this distribution becomes  $P(\overleftrightarrow{X})$.
\end{proof}
This theorem demonstrates that a classical  $\varepsilon$-machine of a stochastic process gives direct rise to an iMPS. This representation has a clear operational meaning: each element of the site matrix corresponds to the transition probability between classical causal states. 

Note that the iMPS in Eq.~(\ref{eq:siteMatrix}) is guaranteed to generate the correct statistics due to $\varepsilon$-machines being predictive models. This follows from Theorem 1, and does not generally hold for nonpredictive HMMs. In addition, ergodicity of the process implies the nondegeneracy of the iMPS. Thus, the q-sample of the process is always the unique ground state of some local Hamiltonian~\cite{2011Norbert}. 

Now, we make use of this MPS representation to point out the relation between the quantum machine complexity and the entanglement of a process' q-sample across any bipartition.

\begin{thm}
 The entanglement across any bipartition of a stochastic process' q-sample equals the quantum machine complexity $C_q$. Its Schmidt rank is equal to the dimension of the memory Hilbert space.
 \label{thm:2}
\end{thm}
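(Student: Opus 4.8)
The plan is to combine the iMPS representation of the q-sample from Theorem~1 with the standard computation of bipartite entanglement for translationally invariant MPS, and then to recognize the resulting entanglement spectrum as the spectrum of the memory state $\phi$ in Eq.~(\ref{eq:phi}). Concretely, I would show that the reduced density matrix obtained by tracing out one half of the chain has the same nonzero eigenvalues as the product $V_l V_r$ of the leading left and right eigenvectors of the transfer matrix, and that these eigenvalues coincide with those of $\phi$. Since the von Neumann entropy and the rank of $\phi$ are precisely $C_q$ and $\dim\mathcal{H}_m$, both claims then follow at once.

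First I would invoke translation invariance: because the iMPS is built from a single site matrix $A^x$, every bond cut yields the same transfer matrix and hence the same leading eigenvectors, so it suffices to analyze one bipartition. Cutting the chain at a bond, I write the q-sample as $\ket{P(\overleftrightarrow{x})} = \sum_\alpha \ket{L_\alpha}\ket{R_\alpha}$, where $\ket{L_\alpha}$ and $\ket{R_\alpha}$ are the (unnormalized) left and right block states labelled by the virtual index $\alpha$ at the cut. Tracing out the right half gives
\begin{equation}
\rho_L = \sum_{\alpha\beta}\langle R_\beta|R_\alpha\rangle\,\ket{L_\alpha}\bra{L_\beta},
\end{equation}
and a short calculation in the non-orthonormal basis $\{\ket{L_\alpha}\}$ shows that the nonzero eigenvalues of $\rho_L$ are exactly those of the matrix product of the two Gram matrices $\langle L_\alpha|L_\beta\rangle$ and $\langle R_\alpha|R_\beta\rangle$. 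In the $L\to\infty$, $t\to-\infty$ limit these Gram matrices converge to the leading left and right eigenvectors $V_l$ and $V_r$ of the transfer matrix (as already argued in the excerpt), so the entanglement spectrum is the spectrum of $V_l V_r$, normalized by $\mathrm{Tr}[V_lV_r]=1$.

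Next I would identify this spectrum with that of $\phi$ using the explicit forms in Eq.~(\ref{eq:Domain_left}): $V_l=\sum_k\pi_k\ket{k}\bra{k}$ is diagonal, and $V_r=\sum_{kj}\langle\sigma_k|\sigma_j\rangle\ket{k}\bra{j}$ is the Gram matrix of the memory states. Writing $B:=\sum_k\sqrt{\pi_k}\ket{\sigma_k}\bra{k}$, one has $\phi=BB^\dagger$ while $B^\dagger B=V_l^{1/2}V_rV_l^{1/2}$, which is similar to $V_lV_r$. Since $BB^\dagger$ and $B^\dagger B$ share the same nonzero eigenvalues, $\phi$ and $V_lV_r$ have identical spectra, and the normalization $\mathrm{Tr}[V_lV_r]=\sum_k\pi_k\langle\sigma_k|\sigma_k\rangle=1=\mathrm{Tr}[\phi]$ is automatic. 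Consequently the entanglement entropy equals $-\mathrm{Tr}[\phi\log_2\phi]=C_q$, and the Schmidt rank equals $\mathrm{rank}(\phi)=\dim\mathcal{H}_m$, as claimed.

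I expect the main obstacle to be the rigorous justification of the MPS entanglement formula in the infinite, non-canonical setting, rather than the final algebraic identification. In particular, the site matrices $A^x=\sqrt{T^x}$ are not presented in canonical form, so I must verify that the leading transfer-matrix eigenvalue is $\eta=1$ and that $V_l$ is diagonal with the stationary weights $\pi_k$; both rely on the unifilarity of the $\varepsilon$-machine, namely that for fixed $k$ and $x$ at most one $j$ has $T^x_{kj}\neq0$, which makes the off-diagonal contributions to the left-eigenvector equation vanish. A secondary point is the similarity step $V_l^{1/2}V_rV_l^{1/2}\sim V_lV_r$, which requires $\pi_k>0$ on the support; this holds for the recurrent causal states, and otherwise one compares nonzero spectra directly.
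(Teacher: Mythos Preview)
Your proposal is correct and follows essentially the same route as the paper: identify the bipartite entanglement spectrum of the q-sample with the spectrum determined by the leading transfer-matrix eigenvectors $V_l,V_r$, and then match it to the spectrum of $\phi$ via a factorization. The only cosmetic differences are that the paper reaches the first step by invoking the iMPS canonical form (Schmidt values as singular values of $W_lW_r$ with $V_l=W_l^\dagger W_l$, $V_r=W_rW_r^\dagger$) rather than your direct Gram-matrix computation, and it phrases your $BB^\dagger\!\leftrightarrow\!B^\dagger B$ step as the equality of reduced-state spectra of the auxiliary purification $\ket{\psi_{AB}}=\sum_k\sqrt{\pi_k}\ket{k}_A\ket{\sigma_k}_B^{*}$; these are the same linear-algebraic facts in different clothing.
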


\begin{proof}
--First, the leading left and right eigenvectors can always be decomposed as $V_l = W_l^\dag W_l$ and $V_r = W_r W_r^\dag$. The Schmidt coefficients are the positive square roots of the eigenvalues of the density matrix $\rho = W_lW_rW_r^\dag W_l^\dag = W_lV_rW_l^\dag$ (see Supplementary Material~\ref{Appendix:Canonical form}~\cite{SupplementaryInformation}). In line with Eq.~(\ref{eq:Domain_left}) we choose $W_l = \sum_k \sqrt{\pi_k}|k\rangle\langle k|$.   Then we construct a quantum state
\begin{equation}
	\ket{\psi_{AB}} = \sum_k \sqrt{\pi_k}\ket{k}_A\ket{\sigma_k}_B^*
\end{equation}
where $\ket{\sigma_k}^*$ is the complex conjugate of $\ket{\sigma_k}$. Partial trace over system $B$ yields
\begin{equation}
	\rho_A = \sum_{k,j}\sqrt{\pi_k\pi_j} \langle \sigma_k|\sigma_j\rangle \ket{k}\bra{j} = W_l V_r W_l^\dag
\end{equation}
Thus, the Schmidt coefficients of the q-sample are the square roots of the density matrix $\rho_A$'s eigenvalues (see Supplementary Material~\ref{Appendix:Canonical form}~\cite{SupplementaryInformation}).
On the other hand, partial trace over system $A$ gives
\begin{equation}
	\rho_B = \sum_k \pi_k\ket{\sigma_k}^*\bra{\sigma_k}^*.
\end{equation}
Note that the complex conjugate of $\rho_B$ is equal to $\phi$ [Eq.~(\ref{eq:phi})]. Hence, since $\rho_A$ and $\rho_B$ must have the same spectrum, $\rho_A$ also has the same spectrum as $\phi$. Thus, the Schmidt coefficients are the square roots of the eigenvalues of $\phi$.

This implies that any Renyi entropy of the squared Schmidt coefficients and the corresponding entropy of $\phi$ are also equal
\begin{equation}
	H_\alpha\left(c_k^2\right) = H_\alpha(\phi)
\end{equation}
For the special cases $\alpha =1$ and $\alpha = 0$, we obtain Theorem~\ref{thm:2}.
\end{proof}

This connection illustrates an interesting link between the resource costs of generating a time sequence--a property with a temporal direction, with quantum correlations between spatial systems--a property without temporal direction. 

Our final result presents a systematic means of constructing a stochastic process' $q$-simulator from the iMPS of its $q$-sample. 

\begin{thm}
\label{thm:Obtain_Q_simulator}
   The q-simulator for a stochastic process~\cite{Binder2017} can be systematically constructed from its iMPS representation according to Theorem 1:
   \begin{enumerate}
    \item The internal quantum states of the q-simulator are
\begin{equation}
	\ket{\sigma_j} := W_r^\dag |j\rangle
\end{equation}
where $W_r$ is a decomposition  $ W_rW_r^\dag = V_r$ such that the image of $W_r^\dag$ is the same as the image of $V_r$.
\item The stepwise unitary interaction of the q-simulator is given by
\begin{equation}
\label{Eq:Unitary}
  \bra{x}U\ket{0} :=(W_r^{-1}A^x W_r)^\dag,
\end{equation}
 where $W_r^{-1}$ is defined to be the inverse matrix of $W_r$ on the memory Hilbert space $\mathcal{H}_m = \mathrm{span}\{\ket{\sigma_k}\}$.
     \end{enumerate}
\end{thm}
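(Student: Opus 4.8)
The plan is to show that the two objects defined in the statement reproduce \emph{exactly} the data of the q-simulator of Eqs.~(\ref{eq:phi})--(\ref{eq:Unitarydefinition}): the states $\ket{\sigma_j}$ must carry the correct overlaps, the map $\bra{x}U\ket{0}$ must implement the transition rule~(\ref{eq:Unitarydefinition}), and $U$ restricted to $\mathcal{H}_m\otimes\ket{0}$ must be an isometry so that it extends to a unitary. Every step rests on two facts about the right leading eigenvector: that $V_r=W_rW_r^\dagger$ obeys $\sum_x A^xV_r(A^x)^\dagger=V_r$, the eigenvalue being pinned to $1$ by normalization of the q-sample, and that the image of $V_r$ is left invariant by each $A^x$. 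The latter I would establish first, since it underlies all the bookkeeping: any $\ket{v}$ with $V_r\ket{v}=0$ satisfies $\sum_x\bra{v}A^xV_r(A^x)^\dagger\ket{v}=0$, forcing the non-negative term $\langle v|A^xV_r(A^x)^\dagger|v\rangle$ to vanish for each $x$ and hence $(A^x)^\dagger\ket{v}\in\ker V_r$; so $\ker V_r$ is $(A^x)^\dagger$-invariant and its orthocomplement, the image of $V_r$ equal to $\mathcal{H}_m$, is $A^x$-invariant.

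First I would check the internal states. With $\ket{\sigma_j}:=W_r^\dagger\ket{j}$ and $V_r=W_rW_r^\dagger$, one has $\langle\sigma_k|\sigma_j\rangle=\bra{k}W_rW_r^\dagger\ket{j}=\bra{k}V_r\ket{j}$, which is precisely the Gram matrix appearing in $V_r$ in Eq.~(\ref{eq:Domain_left}); the proposed states therefore carry exactly the overlaps demanded of the q-simulator's memory.

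Next I would verify the transition rule. Substituting the definitions, $\bra{x}U\ket{\sigma_k}\ket{0}=(W_r^{-1}A^xW_r)^\dagger\ket{\sigma_k}=W_r^\dagger(A^x)^\dagger(W_r^{-1})^\dagger W_r^\dagger\ket{k}$. Writing $(W_r^{-1})^\dagger W_r^\dagger=W_rW_r^{-1}$, the orthogonal projector onto the image of $V_r$, the component of $\ket{k}$ in $\ker V_r$ is carried by $(A^x)^\dagger$ back into $\ker V_r=\ker W_r^\dagger$ and annihilated by the outer $W_r^\dagger$, so the expression collapses to $W_r^\dagger(A^x)^\dagger\ket{k}$. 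Since every matrix element $A^x_{kj}=\sqrt{T^x_{kj}}$ is real, $(A^x)^\dagger\ket{k}=\sum_j\sqrt{T^x_{kj}}\ket{j}$, whence $\bra{x}U\ket{\sigma_k}\ket{0}=\sum_j\sqrt{T^x_{kj}}\,W_r^\dagger\ket{j}=\sum_j\sqrt{T^x_{kj}}\ket{\sigma_j}$, which is Eq.~(\ref{eq:Unitarydefinition}).

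The delicate step, which I would treat most carefully, is showing $U$ is an isometry on $\mathcal{H}_m$, since this is what legitimizes completing it to a unitary and is where the restriction to the memory subspace really bites. Here $\sum_x(\bra{x}U\ket{0})^\dagger(\bra{x}U\ket{0})=\sum_x W_r^{-1}A^xW_rW_r^\dagger(A^x)^\dagger(W_r^{-1})^\dagger=W_r^{-1}\big(\sum_x A^xV_r(A^x)^\dagger\big)(W_r^{-1})^\dagger$. Invoking the eigenvector relation $\sum_x A^xV_r(A^x)^\dagger=V_r$ and $V_r=W_rW_r^\dagger$ gives $W_r^{-1}W_rW_r^\dagger(W_r^{-1})^\dagger=(W_r^{-1}W_r)(W_r^{-1}W_r)^\dagger=I_{\mathcal{H}_m}$, where $W_r^{-1}W_r=I_{\mathcal{H}_m}$ holds precisely because the hypothesis that the image of $W_r^\dagger$ equals that of $V_r$ makes $W_r$ an injection onto $\mathcal{H}_m$ admitting a genuine left inverse there. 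The main obstacle throughout is thus the bookkeeping on a possibly rank-deficient memory space: one must consistently read $W_r^{-1}$ as the inverse on $\mathcal{H}_m$ and repeatedly use the $A^x$-invariance of the image of $V_r$ to justify each cancellation. With isometry on $\mathcal{H}_m$ in hand, any unitary extension on the joint Hilbert space realizes the q-simulator, completing the proof.
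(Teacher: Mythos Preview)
Your proof is correct and follows essentially the same route as the paper: both establish the $A^x$-invariance of $\mathcal{H}_m=\mathrm{im}\,V_r$ (you via $(A^x)^\dagger$-invariance of $\ker V_r$, the paper by the equivalent contrapositive), both derive the isometry from $\sum_x A^xV_r(A^x)^\dagger=V_r$ in exactly the same way, and both reduce the transition rule to $W_r^\dagger(A^x)^\dagger\ket{k}=\sum_j\sqrt{T^x_{kj}}\ket{\sigma_j}$ by using that invariance to drop the projector $W_rW_r^{-1}=P_{\mathcal{H}_m}$. Your additional explicit check that $\langle\sigma_k|\sigma_j\rangle=\bra{k}V_r\ket{j}$ is a nice touch the paper leaves implicit.
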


The approach makes use of existing results in MPS literature. There, it was found that each MPS representation of a quantum state gives rise to a means of synthesizing the state through a unitary quantum circuit~\cite{Schon2005,Vidal2007}. In Supplementary Material~\ref{Appendix:Theorem3}~\cite{SupplementaryInformation}, we adapt these methodologies for the specific class of iMPS in Eq.~(\ref{eq:siteMatrix}) such that the resulting quantum circuits are also valid predictive models. That is, the resulting circuit allows a systematic means to encode any given past $\overleftarrow{x}$ into a suitable quantum state $\ket{\sigma_k}$. A prescription for the exact circuit on $\ket{\sigma_k}$ then allows generation of correct conditional future statistics $P(\overrightarrow{X}|\overleftarrow{x})$. In addition, such models feature smaller memory dimension whenever $V_r$ is not full rank.

\inline{Discrete renewal process}We illustrate these results using the discrete renewal process with uniform emission probability~\cite{Marzen2017,Elliott2018Continous}, as defined by its $\varepsilon$-machine in Fig.~\ref{fig:Renewal Process}. 

\begin{figure}[htp]
\includegraphics[scale=0.4]{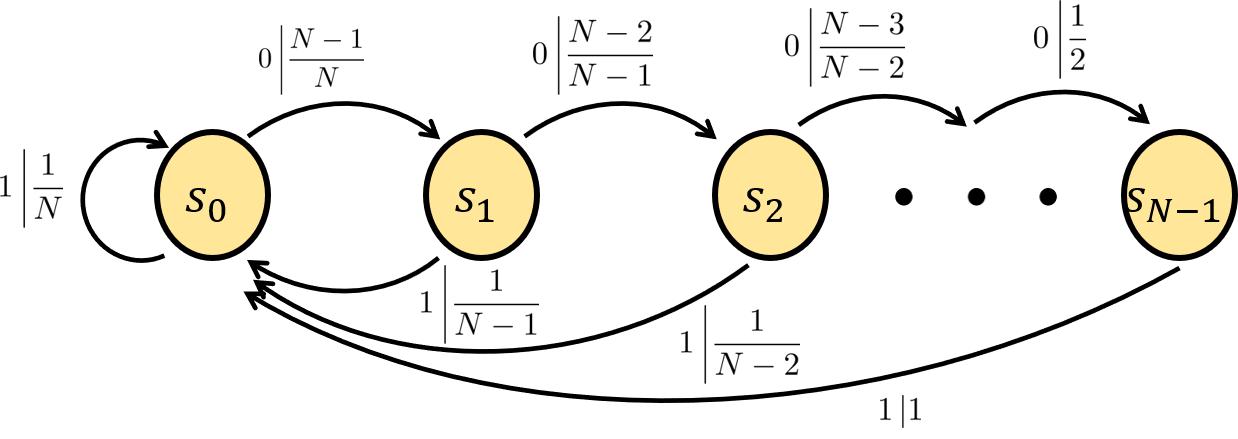}
\caption{{The $\varepsilon$-machine of the discrete renewal process. An edge labeled $x|p$, indicates that the transition from causal state $s_i$ to $s_j$ generates output $x$ with probability $p$. The renewal process either emits $1$ and returns to $s_0$ or $0$ and evolves further.}}
\label{fig:Renewal Process}
\end{figure}

Consider the renewal process' statistics $P(\overleftarrow{X},\overrightarrow{X})$ and its q-sample $\ket{\psi}$. Our results allow (i) direct construction of an MPS representation for $\ket{\psi}$, (ii) the use of MPS methods to find the q-simulator for the renewal process, and (iii) evaluation of the corresponding quantum advantage of simulating such a process quantum mechanically. First, according to Theorem 1, the iMPS site matrix for $\ket{\psi}$ is given by
\begin{equation}
A^0_{k,k+1} = \sqrt{\frac{N-k-1}{N-k}}\quad A^1_{k,0} = \sqrt{\frac{1}{N-k}}
\end{equation}

With Theorem 2, we can compute the quantum memory $C_q$ by using the MPS to quantify the entanglement in $\ket{\psi}$. In Fig.~\ref{fig:Renewal qmemory}, we see that the $q$-simulator requires only a bounded memory while the classical optimal memory scales as $O(\log N)$~\cite{Elliott2018Continous}. In fact,  unbounded memory advantage for $C_q^0$ can also occur~\cite{Thompson2017}.

\begin{figure}[htp]
hi\includegraphics[scale=0.4]{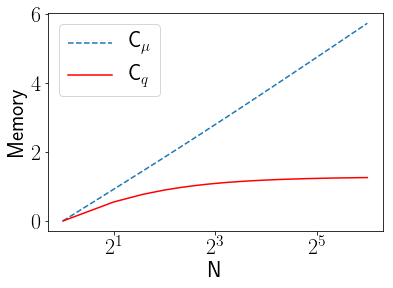}
\caption{The amount of quantum memory $C_q$ converges to a finite value while the optimal classical memory $C_\mu$ diverges logarithmically.}
\label{fig:Renewal qmemory}
\end{figure}

The internal quantum states and the unitary operator $U$ are straightforward to obtain, by virtue of Theorem 3 (see Supplementary Material~\ref{appendix:Renewal process}~\cite{SupplementaryInformation}).

\inline{Conclusion}We have connected two previously distinct notions of complexity: the memory cost of prediction and the representational complexity of spin chains--allowing results in one field to catalyze new insights in the other. By associating each stochastic process with a suitable quantum state of a spin chain, we demonstrated that predictive models constructed in complexity science lead directly to the MPS representation of a general classical stochastic process. Based on this, optimization using standard MPS techniques allows systematic construction of q-simulators--the current state of the art for predictive modeling. Moreover, we established that the memory requirement of such q-simulators exactly coincides with the bipartite entanglement in the associated spin chain state. 

These results open a number of promising avenues for future research. From a foundational perspective, the memory requirements of a q-simulator describe a task that involves a clear temporal direction--using past information to generate future predictions. In the classical setting, this resource cost can vary drastically when time is reversed \cite{Crutchfield2009}, such that costs of prediction and retrodiction differ. Meanwhile, this  research indicates this asymmetry is drastically reduced when quantum models are allowed. The connection to entanglement provides a new perspective to understand this divergence~\cite{Thompson2017}. Moreover, noting that entanglement is calculated rather differently from complexity, the proven equivalence between the two opens the door to identifying more efficient numerical methods for computing the latter.

In addition, tensor networks offer sophisticated techniques to reduce the bond dimension of represented processes at the cost of introducing small inaccuracies. Such methods can now be adapted to quantum modeling--allowing the construction of approximate models with drastically reduced dimensional requirements. Meanwhile, there has been growing interest in generalizing computational mechanics to higher dimensions~\cite{1997Hanson}. It would indeed be interesting to relate such works to higher dimensional variants of tensor networks, such as MERA~\cite{2008Vidal} and PEPS~\cite{Schuch2010}.

\acknowledgements
The authors thank J.\ Thompson, D.\ Poletti, I.\ Arad, J.\ Mahoney, J.\ Crutchfield, and T.\ Elliott for useful discussions. F.C.B. and V.N. thank the Centre for Quantum Technologies for their hospitality. This work was supported by the National Research Foundation of Singapore (NRF-NRFF2016-02), the John Templeton Foundation (Grant No. 54914), the Foundational Questions Institute (grant ``Observer-dependent complexity: the quantum-classical divergence over `what is complex?'{''}),Huawei research, the NRF-ANR grant NRF2017-NRF-ANR004 VanQuTe, and the Singapore Ministry of Education (Tier 1 grant RG190/17).

\bibliography{References}

\clearpage

\appendix
\renewcommand{\thepage}{\roman{page}}
\setcounter{page}{1}
\setcounter{equation}{0}
\setcounter{figure}{0}
\counterwithout{equation}{section}

\renewcommand{\theequation}{S\arabic{equation}}%
\renewcommand{\thefigure}{S\arabic{figure}}%

\onecolumngrid

\section*{Supplementary Information}

\section{MPS and Canonical form}
Consider an iMPS $\{A^x\}$ with left and right boundaries $\{\bra{b_l},\ket{b_r}\}$. In order to link the transfer matrix $\mathbb{E}$ of this iMPS to a completely positive (CP) map we introduce the left and right \textit{vec maps} $\mathcal{V}_l$ and $\mathcal{V}_r$:
\begin{equation}
\mathcal{V}_l:    \langle j|\langle i| \rightarrow \ket{i}\bra{j} \quad
\mathcal{V}_r: \ket{i}\ket{j} \to \ket{i}\bra{j}\label{eq:vecmap}
\end{equation}
A graphical representation of these maps is shown in Fig~\ref{App:Fig:vec}.

\begin{figure}[htp]
	\includegraphics[scale=0.3]{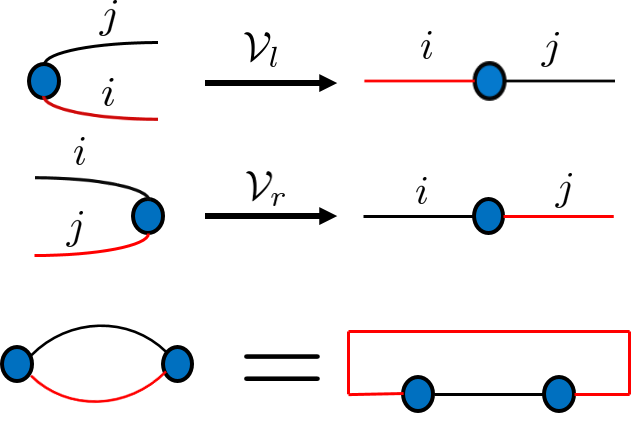}
	\caption{In graphical representation, a vec map is the operation which bends the leg of a node to the other side. }
	\label{App:Fig:vec}
\end{figure}

Applying the left vec map and right vec map to the transfer matrix gives
\begin{equation}
\begin{split}
\leftcp(\ket{j}\bra{i}) &:= \mathcal{V}_l( \langle j|\langle i| \mathbb{E}) = \sum_{x} (A^{x})^\dagger |i\rangle\langle j| A^{x}\\
\rightcp(\ket{j}\bra{i}) &:= \mathcal{V}_l( \mathbb{E}\ket{j}\ket{i}) = \sum_{x} A^{x} |j\rangle\langle i| (A^{x})^\dagger
\end{split}\label{eq:cpmap}
\end{equation}

For an iMPS whose transfer matrix' largest-magnitude eigenvalue is unique, the boundary does not affect the expectation value of any local observable $\mathcal{O}$ on a finite interval of $N$ sites. This can be seen from the fact that any boundary converges to the leading eigenvector of the transfer matrix after applying the transfer matrix infinitely many times (see Fig.~\ref{App:Fig:Expectation value}).
\begin{equation}
\langle\mathcal{O}\rangle = \mathbb{B}_l \mathbb{E}^{\infty}\mathbb{E}_\mathcal{O}\mathbb{E}^\infty \mathbb{B}_r = \mathbb{V}_l\mathbb{E}_\mathcal{O}\mathbb{V}_r
\end{equation}
where $\mathbb{E}_{\mathcal{O}} = \sum_{x_{1:N},x_{1:N}^{'}} \bra{x_{1:N}}\mathcal{O} \ket{x_{1:N}^{'}} (A^{x_1}\cdots A^{x_N})\otimes (A^{x_1^{'}}\cdots A^{x_N^{'}})^*$ and $\mathbb{B}_l = \bra{b_l}\bra{b_l}^*, \mathbb{B}_r = \ket{b_r}\ket{b_r}^*$. Thus, different boundaries give the same expectation value of a local observable. In this case, we can ignore the boundary and focus on the site matrices.

\begin{figure}[htp]
	\includegraphics[scale=0.26]{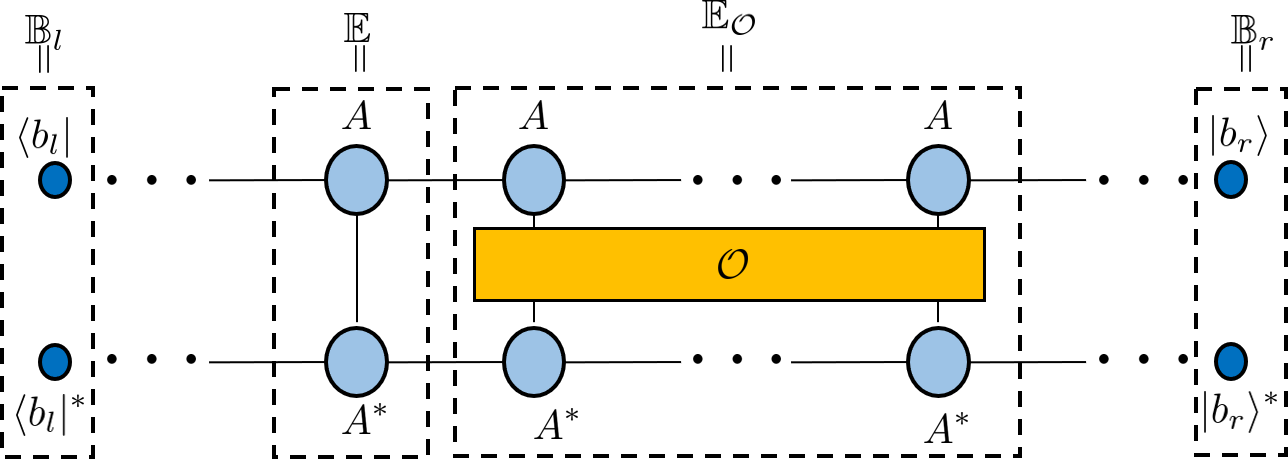}
	\caption{The graphical representation of expectation value $\langle\mathcal{O}\rangle$. $A^*$ denotes the complex conjugate of $A$. The block in the middle represents the observable $\mathbb{E}_\mathcal{O}$, which is a multi-index tensor. The blue nodes are the boundary. The expectation value is given by contracting all indices.}
	\label{App:Fig:Expectation value}
\end{figure}

As the measurement in the computational basis corresponds to the measurement operators $\mathcal{O} =  \ket{x_{t:t+L}}\bra{x_{t:t+L}}$ for finite $t$ and $L$, the distribution of the measurement outcomes is then the expectation value of the measurement operators, i.e.,
\begin{equation}
\begin{split}
P_{\mathrm{MPS}}(X_{t:t+L} = x_{t:t+L}) &= \mathbb{V}_l\mathbb{E}_{\mathcal{O}}\mathbb{V}_r\\
&= \mathrm{Tr} \left[\mathcal{V}_l(\mathbb{V}_r\mathbb{E}_{\mathcal{O}})\mathcal{V}_r(\mathbb{V}_r)\right]\\
&= \mathrm{Tr}\left[{A^{x_{t+L-1}}}^\dag\cdots{A^{x_t}}^\dag V_l A^{x_t}\cdots A^{x_{t+L-1}}V_r\right]
\end{split}
\end{equation}
where $V_l = \mathcal{V}_l(\mathbb{V}_l)$ and $V_r = \mathcal{V}_r(\mathbb{V}_r)$. Note that $\mathbb{V}_l\mathbb{E}_{\mathcal{O}}\mathbb{V}_r = \mathrm{Tr} \left[\mathcal{V}_l(\mathbb{V}_r\mathbb{E}_{\mathcal{O}})\mathcal{V}_r(\mathbb{V}_r)\right] $ is obtained from  the fact that $\bra{i}\bra{j}m\rangle \ket{n} =  \mathrm{Tr}\left[\mathcal{V}_l\left(\bra{i}\bra{j}\right)\mathcal{V}_r\left(\ket{m}\ket{n}\right)\right]$

For any iMPS, a special form, called the canonical form, is expressed as the Schmidt decomposition at any bipartition, see Fig.~\ref{App:fig:canonical form}. For an iMPS whose transfer matrix' largest\hyp magnitude eigenvalue is non\hyp degenerate, there is a systematic way of obtaining the canonical form~\cite{Orus2007}.
\begin{figure}[h]
	\includegraphics[scale=0.4]{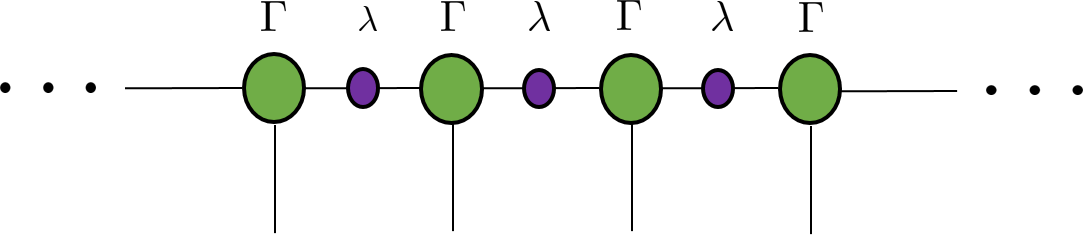}
	\caption{The canonical form of an iMPS consists of two types of tensors, both of which are explained in the text. The purple nodes correspond to the Schmidt coefficients.}
	\label{App:fig:canonical form}
\end{figure}

\label{Appendix:Canonical form}
The essence in computing the canonical form of an iMPS is to deduce its entanglement properties from the transfer matrix $\mathbb{E}$. This is done in four steps:

\begin{enumerate}[label = (\roman*)]
	\item  Assume the largest eigenvalue of transfer matrix $\mathbb{E}$ is unique. Find the leading right and left eigenvectors of the transfer matrix $\mathbb{E}$, $V_r$ and $V_l$, as shown in Fig.~\ref{fig:canonicalization}. Here,  $V_l$ and $V_r$ are rank 2 tensors -- i.e., positive Hermitian matrices. Because the transfer matrix $\mathbb{E}$ is not Hermitian, $V_l$ and $V_r$ are generally different from each other.
	
	\item Decompose $V_r$ and $V_l$ separately, $V_r = W_rW_r^\dagger, V_l = W_l^\dagger W_l$, as shown in Fig.~\ref{fig:canonicalization}. For instance, if $V_r$ has the eigenvalue decomposition $V_r = U D U^\dag$, where $U$ is a unitary matrix and $D$ is a diagonal matrix, then $W_r$ can be set to $W_r = U\sqrt{D}$
	
	\item Insert two identity matrices, $\mathbb{I} = W_rW_r^{-1}$ and $\mathbb{I} = W_l^{-1}W_l$, into the horizontal index, as shown in Fig.~\ref{fig:canonicalization}. Then compute the singular value decomposition of the product $W_lW_r$, namely $W_lW_r = U\lambda V$, where $U$ and $V$ are unitary matrices. $\lambda$ is a diagonal matrix which contains the Schmidt coefficients of $|\psi\rangle$ across any bipartition of the iMPS. \label{App:Schmidt coefficients}
	
	\item Compute $\Gamma^x = V W_r^{-1} A^x W_l^{-1} U$, as shown in Fig.~\ref{fig:canonicalization}.
	
\end{enumerate}

\begin{figure}[h]
	\includegraphics[scale=0.45]{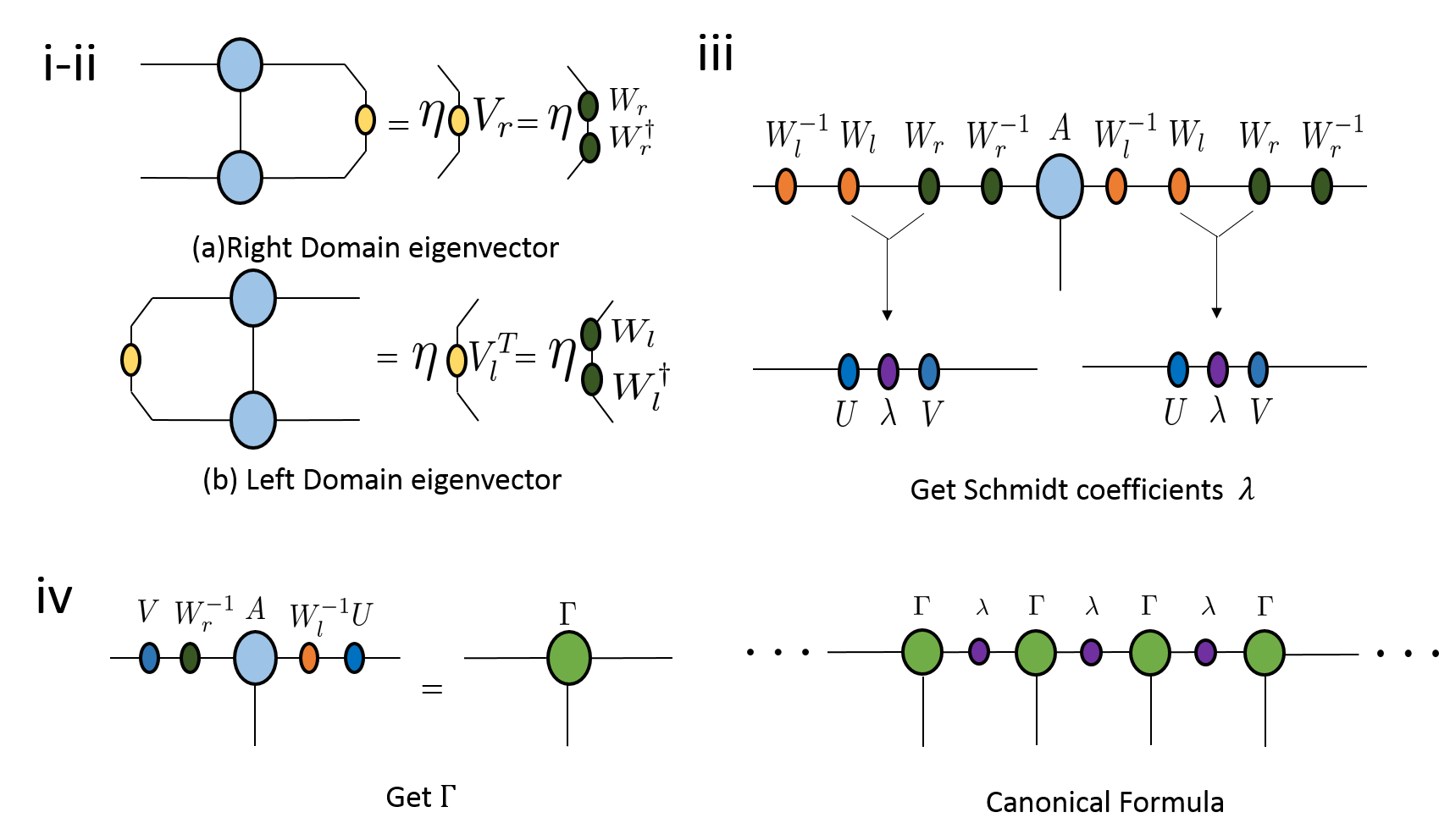}
	\caption{[Color online] The canonical form $\{\Gamma,\lambda\}$ of Fig.~\ref{App:fig:canonical form} is computed in four steps. $\lambda$ is obtained in step (\textrm{iii}) as the singular values of $W_lW_r$, whereas $\Gamma$ is defined in step (\textrm{iv}).}
	\label{fig:canonicalization}
\end{figure}

As shown in~\ref{App:Schmidt coefficients}, the Schmidt coefficients are the singular values of the matrix $W_lW_r$.

\section{The largest eigenvalue of the transfer matrix is unique iff the process is ergodic}
\label{appendix: ergodic property}

In this section, we show that the convergence of the transfer matrix of a q-sample's iMPS representation and convergence of the corresponding stochastic process mutually imply each other (Lemma 2, below). We begin with the following intermediate result:
\begin{lem}
	The largest eigenvalue of the transfer matrix $\mathbb{E}$ is 1.
\end{lem}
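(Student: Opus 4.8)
The plan is to pin down the largest-magnitude eigenvalue $\eta$ of $\mathbb{E}$ by combining the Perron--Frobenius structure of $\mathbb{E}$ with an invariant functional built from the stationary distribution. Since every entry of $A^x$ is real and nonnegative, $(A^x)^*=A^x$ and $\mathbb{E}=\sum_x A^x\otimes A^x$ is entrywise nonnegative; moreover, under vectorization the completely positive map $\rightcp(M)=\sum_x A^x M (A^x)^\dagger$ obeys $\mathrm{vec}(\rightcp(M))=\mathbb{E}\,\mathrm{vec}(M)$, so $\rightcp$ and $\mathbb{E}$ are the same operator in different bases and share their spectra. By Perron--Frobenius the spectral radius is itself a (real, nonnegative) eigenvalue, so it suffices to show $\eta=1$.

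First I would exhibit $1$ as an eigenvalue. Let $\pi$ be the stationary distribution of the $\varepsilon$-machine, $\sum_k\pi_k T_{kj}=\pi_j$ with $T_{kj}=\sum_x T^x_{kj}$, and set $V_l=\sum_k\pi_k\ket{k}\bra{k}$. The $(a,b)$ entry of $\sum_x (A^x)^\dagger V_l A^x$ is $\sum_{x,k}\pi_k\sqrt{T^x_{ka}T^x_{kb}}$. For $a=b$ this collapses to $\sum_k\pi_k T_{ka}=\pi_a$ by stationarity; for $a\neq b$ it vanishes because the unifilarity of $\varepsilon$-machines forces $T^x_{ka}T^x_{kb}=0$ (from $s_k$ the emitted symbol $x$ fixes the successor uniquely). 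Hence $\sum_x (A^x)^\dagger V_l A^x=V_l$, so $1$ lies in the spectrum of $\mathbb{E}$ and $\eta\geq1$.

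For the matching upper bound I would promote $V_l$ to a conserved functional. Define $f(M)=\mathrm{Tr}[V_l M]$; cyclicity of the trace gives $f(\rightcp(M))=\mathrm{Tr}[(\sum_x (A^x)^\dagger V_l A^x)M]=\mathrm{Tr}[V_l M]=f(M)$ for every $M$. Because $\rightcp$ is completely positive it preserves the proper cone of positive semidefinite matrices, so the cone form of Perron--Frobenius (Krein--Rutman) supplies a nonzero $X\succeq0$ with $\rightcp(X)=\eta X$. Applying $f$ yields $f(X)=f(\rightcp(X))=\eta\, f(X)$. Since $\pi_k>0$ for every recurrent causal state we have $V_l\succ0$, whence $f(X)=\mathrm{Tr}[V_l X]>0$ for the nonzero PSD matrix $X$, forcing $\eta=1$. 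Together with the lower bound this gives $\eta=1$.

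The step I expect to be the main obstacle is the upper bound, specifically the need for a \emph{positive semidefinite} Perron eigenvector: ordinary Perron--Frobenius only produces an entrywise-nonnegative eigenvector of $\mathbb{E}$, which need not correspond to a PSD matrix, so the invariance argument must be routed through the cone Perron--Frobenius theorem for the CP map $\rightcp$. The accompanying delicate point is the strict positivity $\mathrm{Tr}[V_l X]>0$, which relies on all causal states carrying positive stationary weight ($V_l\succ0$); this is where the standing assumption that the $\varepsilon$-machine sits in its recurrent stationary regime enters. The one genuinely structural (rather than spectral) ingredient is unifilarity, which is exactly what makes $V_l$ diagonal and annihilates the off-diagonal terms in the second step.
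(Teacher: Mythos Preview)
Your argument is correct, and for the lower bound it coincides with the paper's: both exhibit $V_l=\sum_k\pi_k\ket{k}\bra{k}$ as a fixed point of $\leftcp$, with unifilarity killing the off-diagonal terms and stationarity fixing the diagonal.

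For the upper bound, however, the paper takes a different and more elementary route. Instead of invoking Krein--Rutman to obtain a PSD Perron eigenvector and then pairing it against the conserved functional $f(M)=\mathrm{Tr}[V_lM]$, the paper bounds the induced $\norm{\cdot}_1$ operator norm of $\leftcp$ directly: using unifilarity to collapse the sums and then Cauchy--Schwarz, one gets $\norm{\leftcp(\ket{j}\bra{k})}_1=\sum_x\sqrt{P(x|s_j)P(x|s_k)}\leq1$, with equality iff $j=k$, and hence $\norm{\leftcp(B)}_1\leq\norm{B}_1$ for every matrix $B$. This immediately forces all eigenvalues to have modulus at most $1$, without any cone-theoretic machinery and without needing $\pi_k>0$. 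Your route is more structural and makes the role of the invariant functional transparent, but it leans on the Krein--Rutman theorem and on strict positivity of the stationary distribution. The paper's route, besides being self-contained, also yields the \emph{strict} contraction $\norm{\leftcp(\ket{j}\bra{k})}_1<1$ for $j\neq k$, which is precisely the ingredient reused in the proof of the subsequent lemma on ergodicity and non-degeneracy; your argument does not produce this byproduct.
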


First note that the CP map $\leftcp$ given by Eq.~\ref{eq:cpmap} has the same eigenvalues as the transfer matrix $\mathbb{E}$. It hence suffices to show that the largest eigenvalue of $\leftcp$ is 1. For the proof we use the matrix norm $\norm{*}_1$ which is defined by $\norm{B}_1 := \sum_{ij} \abs{B_{ij}}$ for a matrix $B=\sum_{i,j}b_{ij}\ket{i}\!\bra{j}$ in an orthonormal basis $\{\ket{i}\}$.
For any rank-1 matrix $|j\rangle\!\langle k|$,
\begin{equation}
\begin{split}
&\quad\norm{\mathcal{E}_l(|j\rangle\!\langle k|)}_1 \\
&=\sum_{mn}\abs{\langle m|\mathcal{E}_l(|j\rangle\!\langle k|)|n\rangle}\\
&= \sum_{mn}\sum_{x} \sqrt{P(x,s_m|s_j)P(x,s_n|s_k)}
\end{split}
\end{equation}

Since the causal state at the next step is determined by the output symbol and the previous causal state, we obtain

\begin{equation}
\begin{split}
\sum_{mn}\sum_{x} \sqrt{P(x,s_m|s_j)P(x,s_n|s_k)}
&=\sum_{x}\sqrt{P(x|s_j)P(x|s_k)}\\
&\leq 1
\end{split}
\end{equation}

The last step follows from the Cauchy-Schwarz inequality. Equality holds iff $k=j$. For any matrix $B$ as above
\begin{equation}
\begin{split}
\norm{\mathcal{E}_l(B)}_1 \leq \sum_{ij}\abs{b_{ij}}\norm{\leftcp(|i\rangle\!\langle j|)}_1\leq\sum_{ij}\abs{b_{ij}} =\norm{B}_1
\end{split}
\end{equation}  Therefore any eigenvalue $\lambda$ of $\leftcp$ is less or equal to 1.

In general, for any diagonal density matrix $\rho_d=\sum p_k\ket{k}\!\bra{k}$
\begin{equation}
\begin{split}
\leftcp(\rho_d) &= \sum_{x} (A^{x})^{\dagger} \rho_d A^{x}\\
&=\sum_{x}(A^{x})^{\dagger} \sum_k p_k |k\rangle\!\langle k| A^{x}\\
&= \sum_j\sum_k p_k\sum_{x} P(x,s_j|s_k)|j\rangle\!\langle j|
\end{split}
\label{eq:rhod}
\end{equation}
For the stationary distribution in particular we define
\begin{equation}
V_l = \sum_k \pi_k|\ccausal{k}\rangle\!\langle\ccausal{k}|
\end{equation}
which has the following property
\begin{equation}
\leftcp(V_l) = V_l
\end{equation}
This can be seen from Eq.~\ref{eq:rhod} by noting that $\sum_{x,k}P(x,s_j|s_k)\pi_k=\pi_j$ due to stationarity. In other words, $V_l$ is an eigenvector of $\leftcp$ with eigenvalue $1$ -- hence, a leading eigenvector. \hfill$\square$

\begin{lem}
	The transfer matrix has only one eigenvalue, of magnitude one, iff the corresponding process is ergodic.
\end{lem}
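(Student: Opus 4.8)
The plan is to work throughout with the completely positive map $\leftcp$, which shares its spectrum with $\mathbb{E}$ (as noted above), and to lean on the unifilarity of the $\varepsilon$-machine: for any word $w=x_1\cdots x_n$ the product $A^w:=A^{x_1}\cdots A^{x_n}$ has at most one nonzero entry per row, so that $\langle j|A^w=\sqrt{P(w|s_j)}\,\langle f(j,w)|$, where $f(j,w)$ is the unique state reached from $s_j$ upon emitting $w$. The first structural fact I would record is that the diagonal subspace $\mathcal{D}=\mathrm{span}\{|k\rangle\!\langle k|\}$ is invariant under $\leftcp$ and that $\leftcp|_{\mathcal{D}}$ is exactly the classical transition operator $T_{cl}$ of the causal-state Markov chain. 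Writing $\leftcp$ in block form relative to $\mathcal{D}$ and the complementary off-diagonal space $\mathcal{O}$, unifilarity forces the block matrix to be upper triangular, so the spectrum of $\leftcp$ is the union of the spectra of $T_{cl}$ and of the off-diagonal block $R$.

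For the easy implication (non-ergodic $\Rightarrow$ degenerate peripheral spectrum) I would argue by contraposition using only $\mathcal{D}$. If the causal-state chain is reducible there are two linearly independent stationary distributions $\pi^{(1)},\pi^{(2)}$, each giving a diagonal fixed point $\sum_k\pi^{(i)}_k|k\rangle\!\langle k|$ of $\leftcp$ with eigenvalue $1$ (the computation of Lemma 1), so eigenvalue $1$ is degenerate. If the chain is irreducible but periodic with period $p>1$, then $T_{cl}$ carries the eigenvalues $e^{2\pi i\ell/p}$ on the unit circle, which embed into $\leftcp$ through $\mathcal{D}$. In either case $\mathbb{E}$ has more than one eigenvalue of unit magnitude.

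The substantive implication (ergodic $\Rightarrow$ unique peripheral eigenvalue) reduces, by the upper-triangular structure, to showing that $R$ has spectral radius strictly below $1$; the peripheral spectrum of $\leftcp$ then coincides with that of $T_{cl}$, which under ergodicity (irreducibility and aperiodicity of the causal-state chain) is the simple eigenvalue $1$ by Perron--Frobenius. Since $R^n$ is the off-diagonal block of $\leftcp^n$ and projecting onto $\mathcal{O}$ cannot increase $\norm{\cdot}_1$, it suffices to bound $\norm{\leftcp^n(|j\rangle\!\langle k|)}_1$ for $j\neq k$. Unifilarity yields $\leftcp^n(|j\rangle\!\langle k|)=\sum_{|w|=n}\sqrt{P(w|s_j)P(w|s_k)}\,|f(j,w)\rangle\!\langle f(k,w)|$ with non-negative coefficients, so no cancellation occurs and $\norm{\leftcp^n(|j\rangle\!\langle k|)}_1=\sum_{|w|=n}\sqrt{P(w|s_j)P(w|s_k)}\leq 1$ by Cauchy--Schwarz, with equality iff $P(\cdot|s_j)$ and $P(\cdot|s_k)$ agree on every length-$n$ word.

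Here is the crux, which I expect to be the main obstacle. Distinct causal states are by definition distinguishable in their conditional futures, and two measures on infinite sequences differ iff they differ on some finite cylinder; hence for each pair $j\neq k$ there is a finite length at which the inequality above becomes strict, and it remains strict at all larger lengths (marginalize the longer distribution). Taking $N$ to be the maximum such length over the finitely many pairs gives $c:=\max_{j\neq k}\sum_{|w|=N}\sqrt{P(w|s_j)P(w|s_k)}<1$, whence $\norm{R^N}_1\leq c<1$ and the spectral radius of $R$ is at most $c^{1/N}<1$. This is precisely where the predictive (unifilar) nature of $\varepsilon$-machines is indispensable: for a generic HMM neither the no-cancellation formula nor the eventual distinguishability of internal states would hold, and the contraction argument would fail.
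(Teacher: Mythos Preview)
Your argument is correct. It shares the paper's central idea---split $\leftcp$ according to its action on diagonal versus off-diagonal matrices, identify the diagonal part with the classical causal-state transition operator, and show that the off-diagonal part is strictly contracting---but organizes it differently and is in one respect more careful.

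The paper does not make the block-triangular structure explicit; instead it argues directly that any eigenvector of $\leftcp$ with unit-modulus eigenvalue must be diagonal, using the one-step bound $\norm{\leftcp(\ket{j}\!\bra{k})}_1<1$ for $j\neq k$ established in Lemma~1, and then invokes ergodicity of the causal-state chain to pin down the unique diagonal fixed point. Your route via the spectral radius of the off-diagonal block $R$ reaches the same conclusion through the factorization of the characteristic polynomial, which also makes algebraic simplicity of the eigenvalue $1$ transparent. The substantive difference is that you take the contraction bound at a word length $N$ chosen so that every pair $j\neq k$ is already distinguished, whereas the paper asserts strict inequality at $n=1$. That one-step assertion amounts to the claim that distinct causal states always have distinct single-symbol output distributions $P(x\mid s_j)$; this need not hold for a general $\varepsilon$-machine (two causal states can agree on one-step marginals yet differ at longer horizons), so your multi-step contraction is the more robust version of the same mechanism. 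For the converse direction the two proofs are essentially contrapositives of one another: you exhibit extra unit-circle eigenvalues of $T_{cl}$ under reducibility or periodicity, while the paper observes that uniqueness forces every diagonal initial state to converge to the stationary one.
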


First, we assume that the process is ergodic and that $V_l$ is a leading eigenvector of $\leftcp$. Assume that $V_l$ has non-zero off-diagonal elements.
Since $\norm{\leftcp(|j\rangle\!\langle k|)}_1 < 1$ for all $j\neq k$,
\begin{equation}
\norm{\leftcp(V_l)}_1\leq \sum_{jk} \abs{(V_l)_{jk}}\norm{\leftcp(\ket{j}\bra{k})}_1 < \norm{V_l}_1
\end{equation}
which contradicts $\norm{\leftcp(V_l)}_1 = \norm{V_l}_1$. Thus, $V_l$ only has non-zero diagonal elements. Convergence of the process then ensures that there is only one diagonal matrix that is the leading eigenvector of the CP map $\leftcp$.

The inverse case is also true. We consider again Eq.~\ref{eq:rhod}.
Assuming that the transfer matrix's largest-magnitude eigenvalue is unique, $\rho_d$ always converges to the matrix $\sum_k\pi_k\ket{k}\!\bra{k}$. Hence, any distribution under transition converges to the stationary distribution. \hfill$\square$

\section{The structure of leading left and right eigenvectors}
\label{Appendix: Leftrightleadingeig}

We show that the leading left and right eigenvectors have the following expressions
\begin{align}
V_l &= \sum_k \pi_k \ket{k}\bra{k}\label{App: Domain eigenvector V_l}\\
V_r &= \sum_{k,j} \langle \sigma_k|\sigma_j\rangle \ket{k}\bra{j}
\label{App: Domain eigenvector V_r}
\end{align}

As mentioned in Supplementary Information \ref{appendix: ergodic property}, the leading left eigenvector is
\begin{equation}
V_l = \sum_k \pi_k \ket{k}\bra{k}
\end{equation}

For the leading right eigenvector, we show that $V_r$ (Eq.~\ref{App: Domain eigenvector V_r}) is a fixed point of the CP map $\mathcal{E}_r (\rho) = \sum_x A^x\rho(A^x)^\dagger$. Applying $\mathcal{E}_r$ to $V_r$, we obtain
\begin{equation}
\begin{split}
\bra{k}\mathcal{E}_r(V_r)\ket{j} &= \sum_x \bra{k}A^x V_r {A^x}^{\dag}\ket{j}\\
&= \sum_{m,n,x}\sqrt{T^x_{km}T^x_{jn}}\langle \sigma_m|\sigma_n\rangle
\end{split}
\end{equation}
Now we use the fact that $U\ket{\sigma_k}\ket{0} = \sum_x \sqrt{T^x_{km}}\ket{\sigma_m}\ket{x}$, which results in
\begin{equation}
\bra{\sigma_k}\sigma_j\rangle = \sum_{m,n,x}\sqrt{T^x_{km}T^x_{jn}}\bra{\sigma_m}\sigma_n\rangle
\end{equation}

Thus, $\bra{k}\mathcal{E}_r(V_r)\ket{j} = \langle \sigma_k|\sigma_j\rangle$. In other words, $V_r$ is the leading right eigenvector.

\section{iMPS generates the correct distribution}
\label{Appendix:probabilities}
Here, we prove that the iMPS generates the correct distribution. As shown in Supplementary Information~\ref{Appendix:Canonical form}, the distribution of measurement outcomes of an iMPS is
\begin{equation}
\begin{split}
&P_{MPS}(x_{t:L})\\
&= Tr({A^{x_{t+L-1}}}^\dag\cdots{A^{x_{t}}}^\dag V_l A^{x_{t}}\cdots A^{x_{t+L-1}}V_r) \\
&= \sum_n \langle n|{A^{x_{t+L-1}}}^\dag\cdots{A^{x_t}}^\dag V_l A^{x_t}\cdots A^{x_{t+L-1}}\sum_m |m\rangle\langle m| V_r|n\rangle\\
&= \sum_{n,m,i} \pi_j\sqrt{P(x_{t:t+L},s_n|s_i) P(x_{t:t+L},s_m|s_i)} \braket{\sigma_m}{\sigma_n}\\
\end{split}
\end{equation}
where we have used Eq.~\ref{App: Domain eigenvector V_l} and Eq.~\ref{App: Domain eigenvector V_r}. We note that the next causal state is identified by the output symbol $x$ and the previous causal state. In other words, there exists only one causal state $s_n$ for which  $P(x_{t:t+L},s_n|s_i)$ is non-zero.
\begin{equation}
\begin{split}
P_{MPS}(x_{t:L}) &= \sum_{i,n} \pi_jP(x_{t:t+L},s_n|s_i)\braket{\sigma_n}{\sigma_n}\\
&= P(x_{t:t+L})
\end{split}
\end{equation}

\section{Proof of Theorem 3}
\label{Appendix:Theorem3}
There are two aspects of Theorem 3. First, $U$ defined in Eq.~\ref{Eq:Unitary} is a unitary operator. More importantly, $U$ couples the internal quantum states with the ancillary system such that correct statistics are generated, i.e., $U\ket{\sigma_k}\ket{0} = \sum_{x,j} \sqrt{T^x_{kj}}\ket{\sigma_j}\ket{x}$.

For the first part, as $U$ is partly defined on the space $\mathcal{H}_m \otimes \ket{0}$, we show that $U$ is an isometry which can be extended to a unitary operator.
\begin{lem}
	$U$ in Eq.~\ref{eq:Unitarydefinition} is an isometry such that
\end{lem}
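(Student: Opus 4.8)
The plan is to derive the isometry property directly from the two defining ingredients of Theorem~\ref{thm:Obtain_Q_simulator} and then reduce the key identity to the fixed-point equation for $V_r$ already proved in Supplementary Material~\ref{Appendix: Leftrightleadingeig}. First I would unpack $\ket{\sigma_j}=W_r^\dag\ket{j}$ together with $\bra{x}U\ket{0}=(W_r^{-1}A^xW_r)^\dag=W_r^\dag(A^x)^\dag(W_r^{-1})^\dag$, and act on a basis vector $\ket{\sigma_k}\ket{0}$. The composition $(W_r^{-1})^\dag W_r^\dag$ that arises equals the identity on $\mathcal{H}_m$ by the image condition imposed on $W_r$ (the image of $W_r^\dag$ coincides with that of $V_r=W_rW_r^\dag$, namely $\mathcal{H}_m$). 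Since each entry $A^x_{kj}=\sqrt{T^x_{kj}}$ from Eq.~(\ref{eq:siteMatrix}) is real and non-negative, $(A^x)^\dag\ket{k}=\sum_j\sqrt{T^x_{kj}}\ket{j}$, and therefore
\begin{equation}
U\ket{\sigma_k}\ket{0}=\sum_{x}\bigl(W_r^\dag(A^x)^\dag\ket{k}\bigr)\ket{x}=\sum_{x,j}\sqrt{T^x_{kj}}\,\ket{\sigma_j}\ket{x},
\end{equation}
which is exactly the q-simulator action of Eq.~(\ref{eq:Unitarydefinition}).

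With this action in hand I would verify the isometry condition on the spanning set $\{\ket{\sigma_k}\ket{0}\}$ of $\mathcal{H}_m\otimes\{\ket{0}\}$. Inserting the expression above and using orthonormality of the output states $\ket{x}$ gives
\begin{equation}
\bra{0}\bra{\sigma_k}U^\dag U\ket{\sigma_l}\ket{0}=\sum_{x,m,n}\sqrt{T^x_{km}T^x_{ln}}\,\braket{\sigma_m}{\sigma_n}.
\end{equation}
The right-hand side is precisely the fixed-point identity $\braket{\sigma_k}{\sigma_l}=\sum_{x,m,n}\sqrt{T^x_{km}T^x_{ln}}\braket{\sigma_m}{\sigma_n}$ obtained for the leading right eigenvector in Supplementary Material~\ref{Appendix: Leftrightleadingeig}, so $\bra{0}\bra{\sigma_k}U^\dag U\ket{\sigma_l}\ket{0}=\braket{\sigma_k}{\sigma_l}$. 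Thus $U$ preserves all inner products on a spanning set of $\mathcal{H}_m\otimes\{\ket{0}\}$; by sesquilinearity it preserves them on the whole subspace, so $U$ is an isometry there. An isometry defined on a subspace can always be completed to a unitary on the full Hilbert space, which justifies the ``extended to a unitary'' remark preceding the lemma.

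I expect the main obstacle to be the careful treatment of $W_r^{-1}$ in the generic case where $V_r$ is rank-deficient (precisely when the q-simulator attains a reduced memory dimension). There $W_r$ is not square-invertible, and $W_r^{-1}$ must be read as the inverse of $W_r$ restricted to $\mathcal{H}_m$; two points then need justification: that $(W_r^{-1})^\dag W_r^\dag$ acts as the identity on $\mathcal{H}_m$, and that $A^xW_r$ maps back into $\mathrm{image}(W_r)=\mathcal{H}_m$ so that $W_r^{-1}A^xW_r$ is well defined. I would secure both by invoking that $V_r$ is a fixed point of $\rightcp(\rho)=\sum_x A^x\rho(A^x)^\dag$, so its support $\mathcal{H}_m$ is invariant and all the partial inverses act only where they are defined. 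Once this support bookkeeping is settled, the two displayed computations are routine.
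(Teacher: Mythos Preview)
Your argument is correct, but it takes a different route from the paper. The paper proves the isometry in three lines at the operator level: from $\bra{x}U\ket{0}=(W_r^{-1}A^xW_r)^\dag$ it writes
\[
\bra{0}U^\dag U\ket{0}=\sum_x W_r^{-1}A^x\,W_rW_r^\dag\,(A^x)^\dag(W_r^{-1})^\dag
=W_r^{-1}\Bigl(\sum_x A^xV_r(A^x)^\dag\Bigr)(W_r^{-1})^\dag
=W_r^{-1}V_r(W_r^{-1})^\dag=I,
\]
invoking only $V_r=W_rW_r^\dag$ and the fixed-point equation $\rightcp(V_r)=V_r$. No basis states $\ket{\sigma_k}$, no q-simulator action, and no support bookkeeping enter at this stage; the paper establishes the action $U\ket{\sigma_k}\ket{0}=\sum_{x,j}\sqrt{T^x_{kj}}\ket{\sigma_j}\ket{x}$ only afterwards, via separate Lemmas showing $A^x$ preserves $\mathcal{H}_m$ and $\mathrm{im}\,W_r=\mathcal{H}_m$.

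You instead front-load those support arguments to derive the q-simulator action first, and then verify the isometry componentwise on $\{\ket{\sigma_k}\ket{0}\}$ through the overlap form of the same fixed-point identity. This is perfectly valid and has the merit of forcing an explicit treatment of the rank-deficient case (where $W_r^{-1}$ is only a partial inverse), which the paper's short computation glosses over. The cost is that your proof of the lemma is entangled with the rest of Theorem~3, whereas the paper's direct operator calculation keeps the isometry property logically independent of the basis-level coupling.
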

\begin{equation}
\begin{split}
\bra{0}U^\dag U\ket{0} &= \sum_x  W_r^{-1}A^xV_r(A^x)^\dag (W_r^{-1})^\dag  \\
&=  W_r^{-1} V_r(W_r^{-1})^\dag \\
&= I
\end{split}
\end{equation}
where $I$ is the identity matrix.
\hfill $\square$

Before showing that $U$ correctly couples the internal quantum states and the ancillary system, we state two lemmata that discuss the image of the operators $A^x$ and $W_r$.
\begin{lem}
	$A^x$ maps a quantum state in memory Hilbert $\mathcal{H}_m$ space to memory Hilbert space, i.e. $A^x\ket{\psi} \in \mathcal{H}_m$ if $ \ket{\psi} \in \mathcal{H}_m$.\label{App:Lem: Ax}
\end{lem}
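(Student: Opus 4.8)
The plan is to exploit the fact that $V_r$ is a fixed point of the completely positive map $\rightcp(\rho)=\sum_x A^x\rho(A^x)^\dagger$ established in Supplementary Information~\ref{Appendix: Leftrightleadingeig}, where the leading eigenvalue is shown to equal $1$, together with the identification of the memory Hilbert space with the support of $V_r$. Since $\ket{\sigma_k}=W_r^\dagger\ket{k}$ and, by the hypothesis of Theorem~\ref{thm:Obtain_Q_simulator}, the image of $W_r^\dagger$ coincides with the image of $V_r$, we have $\mathcal{H}_m=\mathrm{span}\{\ket{\sigma_k}\}=\mathrm{image}(V_r)$. Because $V_r$ is a positive Hermitian matrix, its image is exactly the orthogonal complement of its kernel, so the claim $A^x\mathcal{H}_m\subseteq\mathcal{H}_m$ is equivalent to $A^x$ preserving $(\ker V_r)^\perp$.

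First I would show that each $(A^x)^\dagger$ preserves $\ker V_r$. Taking any $\ket{v}\in\ker V_r$ and sandwiching the fixed-point relation $\sum_x A^x V_r(A^x)^\dagger=V_r$ between $\bra{v}$ and $\ket{v}$ gives
\begin{equation}
0=\bra{v}V_r\ket{v}=\sum_x\bra{v}A^x V_r (A^x)^\dagger\ket{v}=\sum_x\norm{\sqrt{V_r}(A^x)^\dagger\ket{v}}^2 .
\end{equation}
Every summand is non-negative, so each vanishes individually; hence $\sqrt{V_r}(A^x)^\dagger\ket{v}=0$ for every $x$. Since $\ker\sqrt{V_r}=\ker V_r$, this yields $(A^x)^\dagger\ket{v}\in\ker V_r$, i.e. $(A^x)^\dagger$ maps $\ker V_r$ into itself.

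The second step is the standard duality between kernel and image under adjoints. For any $\ket{w}\in(\ker V_r)^\perp$ and any $\ket{u}\in\ker V_r$ we have $\braket{u}{A^x w}=\braket{(A^x)^\dagger u}{w}=0$, because $(A^x)^\dagger\ket{u}\in\ker V_r$ by the first step while $\ket{w}$ is orthogonal to $\ker V_r$. Thus $A^x\ket{w}\perp\ker V_r$, so $A^x\ket{w}\in(\ker V_r)^\perp=\mathrm{image}(V_r)=\mathcal{H}_m$, which is the desired conclusion.

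The main obstacle is conceptual rather than computational: one must ensure that the memory Hilbert space is genuinely the support of $V_r$ and not merely the image of $W_r^\dagger$ (these can differ when $W_r$ is not of full rank), which is precisely what the image-matching hypothesis in Theorem~\ref{thm:Obtain_Q_simulator} secures. Once that identification is in place, the positivity of $V_r$ combined with its being a fixed point forces kernel preservation of every $(A^x)^\dagger$, and the Hermiticity of $V_r$ converts this into image preservation of $A^x$, giving $A^x\mathcal{H}_m\subseteq\mathcal{H}_m$.
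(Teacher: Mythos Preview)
Your argument is correct and is essentially the same as the paper's: both use the fixed-point relation $\sum_x A^x V_r (A^x)^\dagger=V_r$ together with positivity of $V_r$ to show that vectors in $\ker V_r=\mathcal{H}_m^\perp$ stay in the kernel under $(A^x)^\dagger$, hence $A^x$ preserves $\mathcal{H}_m$. The paper phrases this as a proof by contradiction using the eigendecomposition of $V_r$, while you give the equivalent direct argument via $\sqrt{V_r}$ and kernel/image duality.
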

Assume the leading right eigenvector has the eigenvalue decomposition
\begin{equation}
V_r = \sum_k \lambda_k \ket{\nu_k}\bra{\nu_k}
\end{equation}
where all $\lambda_k$ are positive eigenvalues and $\ket{\nu_k}$ is the eigenvector. Assume the lemma is wrong. This implies that there exists a quantum state $\ket{\psi}$ that is not in the memory Hilbert space but satisfies
\begin{equation}
\bra{\psi}A^x\ket{\nu_s} \neq 0
\end{equation}
for some $\ket{\nu_s}$ since the basis $\ket{\nu_k}$ spans the memory Hilbert space. Hence
\begin{equation}
\begin{split}
\bra{\psi}V_r\ket{\psi} &= \bra{\psi}\sum_x A^x V_r(A^x)^\dag\ket{\psi}\\
&= \sum_{x,s} \lambda_s|\bra{\psi}A^x\ket{\nu_s}|^2\\
& \geq \lambda_s |\bra{\psi}A^x|\nu_s\rangle|^2\\
& > 0
\end{split}
\end{equation}
However, $\bra{\psi}V_r\ket{\psi} = 0$ contradicts the above equation. Thus, the lemma is correct. \hfill $\square$

Then, we prove that
\begin{lem}
	$W_r$ maps any quantum state into the memory Hilbert space.\label{App:Lem: wr}
\end{lem}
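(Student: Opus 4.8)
The plan is to identify the memory Hilbert space $\mathcal{H}_m$ with the range (image) of $V_r$, and then show that $W_r$ has exactly this same range, so that every output $W_r\ket{\phi}$ automatically lands in $\mathcal{H}_m$. First I would record what $\mathcal{H}_m$ actually is: by Theorem~\ref{thm:Obtain_Q_simulator} the internal states are $\ket{\sigma_j}=W_r^\dagger\ket{j}$, and since the $\ket{j}$ form a basis of the full space, $\mathcal{H}_m=\mathrm{span}\{\ket{\sigma_j}\}=\mathrm{Image}(W_r^\dagger)$. The decomposition $V_r=W_rW_r^\dagger$ was moreover chosen in Theorem~\ref{thm:Obtain_Q_simulator} precisely so that $\mathrm{Image}(W_r^\dagger)=\mathrm{Image}(V_r)$; hence $\mathcal{H}_m=\mathrm{Image}(V_r)$.

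The key linear-algebra step is then the identity $\mathrm{Image}(W_r)=\mathrm{Image}(W_rW_r^\dagger)=\mathrm{Image}(V_r)$. The inclusion $\mathrm{Image}(W_rW_r^\dagger)\subseteq\mathrm{Image}(W_r)$ is immediate. For the reverse I would compare dimensions: if $\bra{\phi}W_rW_r^\dagger\ket{\phi}=0$ then, this being the squared length of $W_r^\dagger\ket{\phi}$, we get $W_r^\dagger\ket{\phi}=0$, so $\ker(W_r^\dagger)=\ker(W_rW_r^\dagger)=\ker(V_r)$. Taking ranks and using $\mathrm{rank}(W_r)=\mathrm{rank}(W_r^\dagger)$ gives $\mathrm{rank}(W_r)=\mathrm{rank}(V_r)$, so the one-sided inclusion between two subspaces of equal dimension must be an equality. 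Alternatively, using the eigendecomposition $V_r=\sum_k\lambda_k\ket{\nu_k}\bra{\nu_k}$ with all $\lambda_k>0$ established in the proof of Lemma~\ref{App:Lem: Ax}, both $\mathrm{Image}(V_r)$ and $\mathrm{Image}(W_r)$ coincide with $\mathrm{span}\{\ket{\nu_k}\}$, the support of $V_r$.

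Chaining these equalities yields $\mathrm{Image}(W_r)=\mathrm{Image}(V_r)=\mathcal{H}_m$, so for any state $\ket{\phi}$ we have $W_r\ket{\phi}\in\mathrm{Image}(W_r)=\mathcal{H}_m$, which is the claim.

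The step I would flag as the main obstacle is conceptual rather than computational: $\mathcal{H}_m$ is a priori the image of $W_r^\dagger$, not of $W_r$, and for a generic non-Hermitian $W_r$ these two $r$-dimensional subspaces do not agree. The argument goes through only because the defining condition on $W_r$ in Theorem~\ref{thm:Obtain_Q_simulator}, namely $\mathrm{Image}(W_r^\dagger)=\mathrm{Image}(V_r)$, together with the automatic identity $\mathrm{Image}(W_r)=\mathrm{Image}(V_r)$, pins \emph{both} ranges to the common support of $V_r$. I would therefore take care to invoke that defining condition explicitly, since it is exactly what rules out the pathological case and makes the subsequent use of $W_r^{-1}$ on $\mathcal{H}_m$ in Eq.~(\ref{Eq:Unitary}) well defined.
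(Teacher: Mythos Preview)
Your proof is correct and follows essentially the same line as the paper's: both establish $\mathrm{Image}(W_r)=\mathrm{Image}(V_r)=\mathcal{H}_m$, with the paper reading this off the singular value decomposition $W_r=\sum_i c_i\ket{u_i}\bra{v_i}$ (so that $V_r=\sum_i c_i^2\ket{u_i}\bra{u_i}$) while you use the equivalent kernel/rank argument. Your explicit invocation of the condition $\mathrm{Image}(W_r^\dagger)=\mathrm{Image}(V_r)$ from Theorem~\ref{thm:Obtain_Q_simulator} to identify $\mathcal{H}_m$ with $\mathrm{Image}(V_r)$ is in fact more careful than the paper, which simply asserts that identification.
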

$W_r$ has the singular value decomposition
\begin{equation}
W_r = \sum_i c_i\ket{u_i}\bra{v_i}
\end{equation}
where $c_i$ is the singular value. $\{\ket{u_i}\}$ and $\{\bra{v_i}\}$ are orthogonal bases. $\{\ket{u_i}\}$ spans   the image space of $W_r$. Thus $V_r = \sum_i c_i^2 \ket{u_i}\bra{u_i}$ has the same image space as $W_r$. And the memory Hilbert space, which is the image of map $V_r$, is equal to the image of $W_r$. \hfill $\square$

Combining Lemma~\ref{App:Lem: Ax} and Lemma~\ref{App:Lem: wr}, we have
\begin{equation}
\mathcal{P}_{\mathcal{H}_m}A^x W_r = A^x W_r
\end{equation}
where ${P}_{\mathcal{H}_m}$ is a projector onto the memory Hilbert space $\mathcal{H}_m$.

Applying $U$ on the quantum state and the ancillary system, we have
\begin{equation}
\begin{split}
U\ket{\sigma_k}\ket{0} &= \sum_{x} (W_r^{-1}A^xW_r)^\dag \ket{\sigma_k}\ket{x}\\
&= \sum_x (P_{\mathcal{H}_m}  A^xW_r)^{\dag}\ket{k}\ket{x}\\
&= \sum_{x,j} \sqrt{T^x_{kj}}\ket{\sigma_j}\ket{x}
\end{split}
\end{equation}
\hfill $\square$

\section{Renewal process}
\label{appendix:Renewal process}
In this section we explicitly derive the internal quantum states $\ket{\sigma_j}$ and the unitary operator $U$ for the renewal process. First, note that $W_r^\dag$ in the basis $\ket{j}$ is given by the matrix
\begin{equation}
\begin{bmatrix}
\sqrt{\frac 1N} & 0 & \cdots & 0 \\
\sqrt{\frac 1N} & \sqrt{\frac {1}{N-1}} &\cdots &0 \\
\vdots & \vdots & \ddots & \vdots\\
\sqrt{\frac 1N} & \sqrt{\frac {1}{N-1}} & \cdots & 1
\end{bmatrix}
\end{equation}
Thus, the signal state is
\begin{equation}
|\sigma_k\rangle = W_r^\dagger|k\rangle =\sum_{j\geq k} \sqrt{\frac{1}{N-k}}|j\rangle
\end{equation}
This leads to the canonical form of the iMPS and the desired elments of the unitary operator $U$:
\[
\bra{0}U\ket{0} :=(W_r^{-1}A^0 W_r)^\dag = \sum_{j=0}^{N-2} \ket{j+1}\bra{j} =\begin{bmatrix}
0 & 0 & 0 &\cdots& 0\\
1 & 0 & 0 &\cdots& 0\\
0 & 1 & 0 &\cdots& 0\\
\vdots & \vdots &\vdots & \ddots &\vdots\\
0 & 0 & 0 & \cdots & 0
\end{bmatrix}\quad \bra{1}U\ket{0} :=(W_r^{-1}A^1 W_r)^\dag =\ket{\sigma_0}\bra{N} = \begin{bmatrix}
0 & 0 & 0 &\cdots& \sqrt{\frac 1N}\\
0 & 0 & 0 &\cdots& \sqrt{\frac 1N}\\
0 & 0 & 0 & \cdots &\sqrt{\frac 1N}\\
\vdots & \vdots &\vdots & \ddots &\vdots\\
0 & 0 & 0 & \cdots & \sqrt{\frac 1N}
\end{bmatrix}
\]
\end{document}